\documentclass[conference]{IEEEtran}

\usepackage{amssymb,amsmath,amsthm,amsfonts}
\usepackage{graphicx}
\usepackage{xcolor}
\usepackage{halloweenmath}
\usepackage{listings}
\usepackage{lstlinebgrd}
\usepackage{xspace}
\usepackage{tcolorbox}
\usepackage{hyperref}
\usepackage{cleveref}
\usepackage{subcaption}
\usepackage{declarations}
\usepackage{mathpartir}
\usepackage{stmaryrd}
\usepackage{multirow}
\usepackage{booktabs}
\usepackage{tabularray}
\usepackage[normalem]{ulem}

\usepackage[available,functional,reproduced]{ndssbadges}

\ifCLASSINFOpdf
\else
\fi

\crefname{equation}{}{}
\Crefname{equation}{}{}

\Crefname{figure}{Figure}{Figures}
\Crefname{table}{Table}{Tables}
\Crefname{theorem}{Theorem}{Theorems}
\Crefname{lemma}{Lemma}{Lemmas}
\Crefname{definition}{Definition}{Definitions}
\Crefname{remark}{Remark}{Remarks}

\hypersetup{
    colorlinks=true,
    citecolor=blue,
    urlcolor=black,
}

\lstdefinestyle{codestyle}{
  basicstyle=\footnotesize\ttfamily,
  numberstyle=\tiny,
  numbers=left,
  float,
  xleftmargin=2em,
  frame=lines,
  breaklines=true,
  tabsize=2,
  escapeinside={<@}{@>},
  captionpos=t,
  abovecaptionskip=-\medskipamount,
  language=C
}
\lstdefinestyle{nolinenumberstyle}{
  basicstyle=\footnotesize\ttfamily,
  numbers=none,
  xleftmargin=0.35em,
  breaklines=true,
  tabsize=2,
  captionpos=t,
  abovecaptionskip=-\medskipamount,
  language=C++
}
\newcommand{\codevar}[1]{\texttt{#1}}
\newcommand{\codefn}[1]{\texttt{#1}}

\newcommand{\authnote}[2]{({\textcolor{blue}{#1}: \textcolor{red}{#2}})}

\newcommand {\changes}[1]{#1}   

\newcommand{\danfeng}[1]{\authnote{Danfeng}{#1}}

\newcommand{\analysis}{\textsc{\framework}\xspace}
\newcommand{\dualbaggy}{\textsc{DualBaggy}\xspace}
\newcommand{\dualdp}{\textsc{DualDeltaPointers}\xspace}

\newcommand{\framework}{\textsc{DualGuard}\xspace}
\newcommand{\frameworkbaggy}{\textsc{DualGuard-\dualbaggy}\xspace}

\newcommand{\frameworksfislh}{\textsc{DualGuard-{\safestack}SLH}\xspace}

\newcommand{\frameworkdp}{\textsc{DualGuard-\dualdp}\xspace}

\newcommand{\baselinebaggy}{\textsc{Baggy+SLH}\xspace}
\newcommand{\baselinesfi}{\textsc{{\safestack}+SLH}\xspace}
\newcommand{\baselinedp}{\textsc{{DeltaPointers}+SLH}\xspace}

\newcommand{\memphaseone}{\textsc{\analysis}$_{\seq}$\xspace}
\newcommand{\memphasetwo}{\textsc{\analysis}$_{\mathghost}$\xspace}

\newcommand{\lightslh}{LightSLH\xspace}
\newcommand{\safestack}{SafeStack\xspace}

\newcommand{\dpselslh}{\textsc{DualDeltaPointers+SLH (Selective)}\xspace}

\newcommand{\baggyselslh}{\textsc{DualBaggy+SLH (Selective)}\xspace}

\newcommand{\sfiselslh}{\textsc{{\safestack}+SLH (Selective)}\xspace}

\newcommand{\True}{\cod{true}}
\newcommand{\False}{\cod{false}}
\newcommand{\Skip}{\cod{skip}}

\newcommand{\while}[2]{\ensuremath{\cod{while}~(#1)~\cod{do}~#2}}

\newcommand{\range}{\mathcal{R}}

\begin{document}

\title{\framework: Comprehensive and Efficient Memory Safety across Sequential and \\ Speculative Execution}

\author{
  \makebox[\textwidth][c]{%
    \begin{tabular}{@{}c@{}}
      Sixuan Dang \\
      Duke University \\
      sixuan.dang@duke.edu
    \end{tabular}\hfill
    \begin{tabular}{@{}c@{}}
      Quan Zhou \\
      Google \\
      quanzhour@google.com
    \end{tabular}\hfill
    \begin{tabular}{@{}c@{}}
      Kaiming Huang \\
      Penn State University \\
      kzh529@psu.edu
    \end{tabular}\hfill
    \begin{tabular}{@{}c@{}}
      Jianan Zhu \\
      UC Riverside \\
      jzhu218@ucr.edu
    \end{tabular}%
  }\\[3ex]
  \makebox[\textwidth][c]{%
    \begin{tabular}{@{}c@{}}
      Anitha Gollamudi \\
      University of Massachusetts Lowell \\
      anitha\_gollamudi@uml.edu
    \end{tabular}\hspace{2.5em}
    \begin{tabular}{@{}c@{}}
      Trent Jaeger \\
      UC Riverside \\
      trentj@ucr.edu
    \end{tabular}\hspace{2.5em}
    \begin{tabular}{@{}c@{}}
      Danfeng Zhang \\
      Duke University \\
      danfeng.zhang@duke.edu
    \end{tabular}%
  }
}


\IEEEoverridecommandlockouts
\makeatletter\def\@IEEEpubidpullup{6.5\baselineskip}\makeatother
\IEEEpubid{\parbox{\columnwidth}{
		Network and Distributed System Security (NDSS) Symposium 2027\\
		22--26 March 2027, Seoul, Republic of Korea\\
		ISBN 978-1-970672-09-1\\  
		https://dx.doi.org/10.14722/ndss.2027.232195\\
		www.ndss-symposium.org
}
\hspace{\columnsep}\makebox[\columnwidth]{}}

\maketitle

\begin{abstract}
Although decades of research have yielded sophisticated mechanisms for enforcing memory safety, speculative execution can subvert these protections through transient memory operations. Conversely, defenses designed specifically for speculative safety, such as Speculative Load Hardening (SLH), cannot enforce memory safety during non-speculative execution. A naive combination of defenses from both domains raises two fundamental challenges: (1) which combination of otherwise fragmented defenses can ensure dual memory safety under speculative and non-speculative executions? (2) whether such combinations introduce redundant protections that incur unnecessary performance overhead?

We present \framework, a unified framework for dual memory protection under non-speculative and speculative executions. \framework systematically analyzes memory behaviors, classifying them into four categories, and further identifies the minimal composition requirements for dual memory safety, which allow us to strategically combine existing mitigations and design new ones to avoid redundant or excessive protections. We implement \framework through a combination of static analysis and runtime enforcement within the LLVM infrastructure. Our evaluation demonstrates that \framework reduces overhead by 3\%–72\% compared to naive compositions of fragmented defenses. In some cases, it even incurs lower overhead than approaches that guarantee memory safety solely under either non-speculative or speculative execution.
\end{abstract}


%
\IEEEpeerreviewmaketitle

\section{Introduction}

Memory safety has long been recognized as a foundational security threat in system security~\cite{cowan2000buffer}. Many defense schemes, ranging from bounds checking \cite{akritidis2009baggy,jones1997backwards,dhurjati2006backwards,kroes2018delta} and sandboxing~\cite{kuznetzov2018safestack,huang2022taming} to compiler and hardware-aided protections~\cite{jim2002cyclone,nagarakatte2013hardware,necula2005ccured,nagarakatte2012watchdog,saileshwar2022heapcheck}, have been developed to prevent memory errors like out-of-bound read (e.g., Heartbleed~\cite{durumeric2014matter}) and write (e.g., Morris worm~\cite{orman2003morris}). 
To reduce the performance overhead of exhaustive checks associated with those schemes, several defense schemes utilize static program analysis to identify and remove redundant checks~\cite{simpson2013memsafe,huang2024top,huang2024comprehensive}: they place checks only at operations whose memory-safety cannot be proven. To further reduce overhead, QuickSafe~\cite{quicksafe} isolates memory objects associated with a known exploit from the rest of the program. These schemes ensure that programs under sequential execution (i.e., architecturally committed or non-speculative) adhere to memory safety constraints. 


However, such memory-safety assurance is undermined by advances in modern processors, namely, {\em speculative execution}. Originally introduced to improve hardware performance through branch prediction and out-of-order execution, speculation has been proven to be an insidious source of transient execution vulnerabilities. Attacks like Spectre~\cite{Kocher2018spectre} and Meltdown~\cite{Lipp2018meltdown} exploit instructions that are speculatively executed to induce new memory errors that are absent in sequential execution. Consider the following code snippet (known as an instance of Spectre attack variant 1):
\vspace{-1ex}
\begin{lstlisting}[numbers=none, frame=none]
if (x < A_size) { y = B[A[x] * 4096] };
\end{lstlisting}
\vspace{-2ex}
The code snippet has no memory error during sequential execution, as $\cod{x}$ is always bounded by $\cod{A\_size}$. But a \emph{speculative memory error} which reads beyond A's bound can occur when the CPU mispredicts the branch outcome and executes $\cod{A[x]}$ speculatively. Although these speculative reads never affect architectural state (like registers and RAM), they can change microarchitectural state such as caches (e.g., which element of $\cod{B}$ is fetched into cache), which can reveal the value accessed speculatively. In light of Spectre attacks, more recent works have revisited memory safety in the context of speculative execution and revealed more attack surfaces due to speculative memory errors~\cite{mambretti2021bypassing,goktas2020speculative,ravichandran2022pacman,kim2025tiktag}.

Despite a substantial body of work on memory safety in the context of sequential execution, and growing interest in its treatment under speculative execution, prior work often implicitly assumes that memory safety mechanisms designed for sequential execution can be directly composed with speculative defenses.
However, as we argue in this paper, this “mitigation-upon-mitigation”
approach 
result in a fragmented protection landscape characterized by two primary limitations:
\begin{itemize}
    \item A naive composition of defenses from these two domains does not necessarily yield both sequential and speculative memory safety, which we term \emph{dual memory safety} in the paper. For example, applying branch-based memory-safety mechanisms like Baggy~\cite{akritidis2009baggy} either before or after SLH-based defenses remains insecure under speculative execution, as they introduce additional memory safety issues during speculative execution (\Cref{sec:motivation}).

    \item A naive combination of defenses from each category can also incur unnecessary overhead. Consider the code below, where a mask-based defense for sequential execution is applied to safeguard accesses to arrays $\cod{A}$ and $\cod{B}$: 
\begin{lstlisting}[numbers=none, frame=none]
if (x < 8) { y = B[A[x mod 8]*4096 mod 8] };
\end{lstlisting}
\vspace{-1ex}
where arrays $\cod{A}$ and $\cod{B}$ have size of 8. The code is already safe under speculative execution. However, speculative defenses (including those that selectively mitigate speculative memory errors~\cite{shivakumar2023selslh,baumann2025fslh,zhu2025lightslh}) still adds redundant protections that incur unnecessary performance overhead.
\end{itemize}

This paper presents the \emph{first} systematic study of the interaction between sequential and speculative \changes{spatial} memory errors: it develops both the theoretical foundations for reasoning about dual memory safety, and practical safety mechanisms that are \emph{efficient and dual memory safe}. In particular, we tackle the following two critical research questions:
\begin{itemize}
    \item 
Q1: Which combinations of otherwise fragmented defenses for sequential and speculative memory safety can ensure dual memory safety?

\item
Q2: How do we achieve dual memory safety without introducing redundant protections that incur unnecessary performance overhead?
\end{itemize}

Our work builds on a rigorous formal framework (\Cref{sec:formalization}) that associates each program location with a \emph{violation region}, defined as the set of out-of-bounds memory accesses that may originate from that location. We further refine this notion by classifying vulnerable accesses within each violation region as sequential-only, speculative-only, or both. This classification isolates the minimal conditions required to prevent each class of memory error.
 
To address Q1, we model \changes{spatial} memory-safety mechanisms as rewriting systems whose goal is to eliminate the violation regions associated with each command. We then identify a key property necessary to achieve dual safety, which we term \emph{safety preservation}. Informally, a sequential (resp. speculative) memory-safety mechanism is safety-preserving if it does not introduce new speculative (resp. sequential) violations. We formally prove that, given two sound memory-safety mechanisms drawn from each domain, applying a safety-preserving mechanism \emph{after} the other always yields dual memory safety. This constitutes the first theoretical contribution of the paper.

To address Q2, we further establish that when both the sequential and speculative mechanisms are sound and safety-preserving, a selective composition strategy becomes feasible. Specifically, the sequential (respectively, speculative) memory-safety mechanism can be safely omitted at program locations that are initially free of sequential (respectively, speculative) memory violations when the corresponding mechanism is applied subsequently. This result enables more efficient compositions while preserving dual memory safety, and constitutes the second theoretical contribution of this work.

Building on these theoretical insights, we analyze the safety properties of a wide variety of mechanisms from the sequential and speculative domains, including Baggy bounds~\cite{akritidis2009baggy}, Delta Pointers~\cite{kroes2018delta}, SafeStack~\cite{kuznetzov2018safestack}, SLH~\cite{slh}, SpecASan~\cite{ganjisaffar2025specasan}, and CHERI~\cite{woodruff2014cheri}. We show that while SafeStack and SLH are safety-preserving, Baggy and Delta Pointers are not (either due to additional branches introduced by code rewriting or underlying hardware enforcement that is unsafe under speculative execution). As a consequence, it is \emph{unsafe} to apply Baggy and Delta Pointers \emph{after} speculative defenses. To the best of our knowledge, this is the \emph{first systematic study} of existing mechanisms for dual memory safety. Moreover, we develop novel variants of Baggy and Delta Pointers, termed \dualbaggy and \dualdp, respectively, which we prove to be dual memory safe by themselves.

Finally, we implement a dual memory safety tool \framework (\Cref{sec:analysis}). \framework employs a static program analysis to systematically identify sources of sequential and speculative \changes{spatial} memory violations at compile time and then applies dual-memory-safe mechanisms, including the novel \dualbaggy and \dualdp mechanisms, selectively to each class of violation regions, thereby eliminating redundant checks. We evaluate \framework on standard security and performance benchmarks, as well as real-world cryptographic libraries (\Cref{sec:evaluation}). The results demonstrate that 
\framework reduces overhead by 3\%–72\% compared to naive compositions of fragmented defenses. In some cases, it even incurs lower overhead than approaches that guarantee memory safety solely under either sequential or speculative execution.



In summary, this work makes the following contributions:
\begin{itemize}
    \item We develop a rigorous formal framework (\Cref{sec:formalization}) that reasons about the interaction between sequential and speculative \changes{spatial} memory errors, as well as key properties that enable safe composition (\Cref{sec:compositon}).

    \item We present the first systematic analysis of dual memory safety across existing sequential and speculative defenses (\Cref{sec:properties}) and develop novel dual-memory-safe variants of Baggy Bounds and Delta Pointers (\Cref{sec:dualbaggy}).
    
    \item  We implement \framework (\Cref{sec:analysis}) that automatically hardens source code to enforce dual memory safety. We evaluate \framework using an extensive benchmark suite of programs (\Cref{sec:evaluation}).
\end{itemize}
\section{Background and Threat Model}


\subsection{Memory Safety}
\label{sec:memsafe}

Violations of memory safety are commonly divided into two categories: spatial and temporal errors.

\textbf{Spatial memory errors} occur when a memory access reads or writes to an address outside of the object's allocated memory bounds. Spatial memory errors remain one of the most prevalent and reliably exploitable classes of software vulnerabilities, which can lead to threats such as buffer overflows and Heartbleed~\cite{durumeric2014matter,orman2003morris}.

\textbf{Temporal memory errors} occur when a memory access reads or writes to an address at a wrong time (e.g., when the address has been deallocated and repurposed). Temporal memory errors can lead to use-after-free bugs, using uninitialized values, and so on.

In this paper, we focus on spatial memory safety, both due to its importance for security, and the fact that it is an important component that enables Spectre-like attacks. Many widely adopted memory safety mechanisms are designed for spatial memory safety as well. For example, mechanisms like bounds checking~\cite{akritidis2009baggy,jones1997backwards,dhurjati2006backwards,kroes2018delta}, software fault isolation (SFI)~\cite{kuznetzov2018safestack, huang2022taming}, and hardware-assisted enforcement~\cite{liljestrand2019pac, partap2022memory}, detect and prevent illegal pointer dereference during non-speculative execution~\cite{vintila2025evaluating}. 




\subsection{Memory Errors under Spectre v1}
\label{sec:spectre}

Among various Spectre and Meltdown attack variants~\cite{canella2019systematic}, a notable class targets the branch prediction unit to circumvent branch-based bounds checking. These attacks, known as ``Spectre v1'' attacks, typically utilize \emph{full gadgets} that speculatively access out-of-bounds memory and subsequently leak it through microarchitectural side channels. For instance, the well-known Spectre attack~\cite{Kocher2018spectre} employs gadgets such as
\lstinline|if (x < A_size) { y = B[A[x] * 4096];} |
to speculatively load memory beyond the bounds of array $\cod{A}$. The resulting access to array $\cod{B}$ modifies a specific cache line that encodes the out-of-bounds value into the cache state. Since the processor does not rollback changes into cache state, an attacker can subsequently infer the secret data via a cache side channel. 

While a full gadget combines an out-of-bound memory access and a cache side channel to reveal secret data, other speculative attacks also exploits \emph{half gadgets}, which only consist of out-of-bound memory accesses. For example, half gadgets 
can be coupled with other microarchitectural attacks like Rowhammer~\cite{kim2014flipping} to reveal the out-of-bounds value~\cite{di2025half}, where half gadgets are used to monitor physical memory activity and guide fault injection. 
In addition, prior works~\cite{bhattacharyya2020specrop, mambretti2021bypassing} utilize out-of-bounds memory writes to launch speculative return-oriented programming (ROP) attacks.  

\subsection{Threat Model}


\framework aims to provide \changes{spatial} dual memory safety across sequential and speculative execution. We adopt a threat model that is consistent with prior research on both memory safety under sequential execution~\cite{akritidis2009baggy,gorter2024sticky}, and security countermeasures against Spectre v1 attacks~\cite{cauligi2022sok,shivakumar2023selslh,zhang2023ultimate}.

More specifically, we consider traditional \emph{spatial} memory safety violations, such as out-of-bound pointer dereferences or invalid memory accesses that occur during sequential execution. Spatial memory safety violations are also the root cause of Spectre v1 vulnerabilities. On the other hand, following prior studies of memory safety~\cite{gorter2024sticky,moghadam2024memory}, we leave temporal memory errors  (e.g., use-after-free) for future work, as they are orthogonal to Spectre attacks.  

For speculative execution, we assume an adversary who is capable of fully exploiting speculative program execution, including controlling branch prediction unit to step into an arbitrary branch at a branching instruction, and provide untrusted inputs that affects which memory address is being accessed speculatively. While most prior works~\cite{zhu2025lightslh,zhang2023ultimate} also assume that only a cache-based side channel that can reveal out-of-bounds values (i.e., they only consider full-gadgets), 
we consider a stronger threat model where \emph{all transient memory accesses} might be exploited (e.g., via Rowhammer attacks~\cite{di2025half} or ROP attacks~\cite{bhattacharyya2020specrop, mambretti2021bypassing}), including those where no immediate cache-based side channels exist after the speculative memory access. In other words, our threat model considers both full and half Spectre gadgets (Section~\ref{sec:spectre}).


Finally, our threat model covers Spectre v1 (bounds check bypass) attacks. Other variants, including Spectre v2 (BTB poisoning)~\cite{barberis2022branch}, Spectre-RSB~\cite{koruyeh2018spectre}, Spectre v4 (speculative store bypass)~\cite{cauligi2022sok}, as well as Meltdown attacks~\cite{Lipp2018meltdown}, are out of the scope in this paper. \changes{We discuss possible future directions beyond the scope of this paper in Section~\ref{sec:future_safety}.}




\subsection{Motivation}
\label{sec:motivation}

Next, we demonstrate why a naive composition of fragmented defenses for memory safety does not achieve sound and efficient dual memory safety.  We take Baggy bounds checking~\cite{akritidis2009baggy} as an example of sequential memory-safety mechanism, and SLH~\cite{slh} as a speculative-safety mechanism. 

Consider the following code snippet.
\begin{lstlisting}[numbers=none, frame=none]
if (x < 9) { y = A[x]; }
\end{lstlisting}
Assume the size of array \codevar{A} is 8. Since the code may access \codevar{A[8]} due to the incorrect bounds check, the code is memory unsafe during sequential execution. Moreover, with speculative execution, an attacker can also perform out-of-bounds access by manipulating the branch predictor.

When both Baggy and SLH are available, one can compose them in either order, with SLH applied before Baggy or vice versa. However, we show that both composition orders are unsafe, and a
naive composition results in redundant layers of protection and hence, unnecessary performance overhead.

\paragraph{SLH first, then Baggy} SLH thwarts Spectre-like attacks by (1) introducing data dependency between the branch condition and the access to the array $\codevar{A}$, and (2) masking the index to the array $\codevar{A}$. 
With SLH, the code is hardened as:

\begin{lstlisting}[
    linebackgroundcolor={%
    \ifnum\value{lstnumber}=2
            \color{blue!30}
    \fi
    \ifnum\value{lstnumber}=4
            \color{blue!30}
    \fi
    }
]
if (x < 9) {
  mis = x < 9 ? 0 : -1;
  addr = &A[x];
  hardened_addr = addr | mis;
  y = load(hardened_addr); }
\end{lstlisting}
Here, \codevar{mis} is a dedicated register tracking the mis-speculation status: it stores all-one bits under misprediction, and all-zero bits otherwise. The address of $\cod{A[x]}$ is hardened by operation OR between the computed addresses and \codevar{mis}. Therefore, if the prediction is correct, \codevar{addr | 0} preserves \codevar{addr}; otherwise, every bit is set to 1, resulting in an invalid address reserved by SLH.

Then, Baggy introduces new branches to protect against sequential memory errors: it rewrites the memory access to:
\begin{lstlisting}[
    linebackgroundcolor={%
    \ifnum\value{lstnumber}=2
            \color{blue!30}
    \fi
    \ifnum\value{lstnumber}=3
            \color{blue!30}
    \fi
    \ifnum\value{lstnumber}=4
            \color{blue!30}
    \fi
    }
]
addr = &A[x];
log_size = size_table[A >> log_slot_size];
if((addr XOR A) >> log_size != 0)
    addr = slow_path(A, addr);
y = load(addr);
\end{lstlisting}
Here,
Baggy introduces a new array access at Line 2, which never goes out of bounds. The reason is that the size table it indexes into has a size of $memory\_size/slot\_size$, where $slot\_size$ is a fixed constant. Due to the right shifting operation,
the introduced array access is always spatially safe.



The branch introduced by Baggy at Line 3, however, is not speculatively safe: it may be mispredicted, allowing an attacker to speculatively reach Line 5 despite Baggy’s check at Line 3. 
Moreover, \codefn{slow\_path} itself contains conditional branches to determine whether an error is repairable, which are also speculatively unsafe, and further enlarge the attack surface.


\paragraph{Baggy first, then SLH} When reversing the composition order, the combination may initially appear dual safe, as the branching instructions introduced by Baggy are mitigated by SLH, thereby removing the attack surface discussed above.

However, this composition is still not dual memory safe due to a more subtle vulnerability. Specifically, Baggy relies on \emph{canonical address checks}, a hardware-level safety mechanism in modern CPUs, to reject accesses to invalid addresses.
Recent work has demonstrated that such checks can be bypassed under speculative execution (e.g., CVE-2020-12965)~\cite{amdCVE}. Consequently, the composition of “Baggy then SLH” remains unsafe under speculative execution, as SLH does not enforce protections for canonical address checks.

\paragraph{Performance overhead} Beyond the security concerns above, a naive composition of fragmented memory safety defenses may introduce redundant or excessive protections, particularly given that these mechanisms already incur nontrivial performance overheads when deployed independently. For instance, on a cryptographic library benchmark, we observe that composing off-the-shelf Baggy with SLH results in up to 10X slowdown (\Cref{sec:eva-dual}).

In summary, naive composition of existing defense mechanisms for sequential and speculative memory safety does not necessarily provide the desired safety guarantees. Moreover, the resulting performance overhead is often prohibitive, limiting the practicality of such approaches for real world software.

\section{Related Work}

\subsection{Memory Safety in Spectre Era}
\label{sec:specmemsafe}
In light of Spectre attacks, several works have revisited memory safety in the context of speculative execution: they analyze how speculative behaviors can invalidate existing memory safety mitigations~\cite{mambretti2021bypassing,goktas2020speculative,ravichandran2022pacman,kim2025tiktag, ganjisaffar2025specasan}. For example, PACMAN~\cite{ravichandran2022pacman} and TikTag~\cite{kim2025tiktag} show that speculative execution could break ARM pointer authentication (PA) and memory tagging extension (MTE), leading to control flow hijacking. BlindSide~\cite{goktas2020speculative} uses memory errors to break Address Space Layout Randomization (ASLR) mechanisms via speculatively probing randomized memory layout without crashing the system. SPEAR~\cite{mambretti2021bypassing} aims to bypass stack canary, CFI and bounds check in GO~\cite{golang}. These works focus primarily on how speculation undermines prior memory-safety defenses. However, they ignore the fact that speculation itself could introduce extra memory errors that are not exposed during sequential execution. As a consequence, they often suggest layering speculative safety mechanisms (\Cref{sec:related-spec}) on top of traditional ones, creating potentially unsafe and redundant protection stacks~\cite{christou2024eclipse, ganjisaffar2025specasan} that \framework aims to identify and remove.

\framework re-examines the interaction between memory safety and speculative execution from a unified perspective. Rather than treating speculative safety as a retrofit to existing memory safety systems, it integrates both domains into a systematic view of when and where protection is truly necessary.



\subsection{Defenses against Spectre V1 Attacks}
\label{sec:related-spec}

Since the seminal work of Spectre~\cite{Kocher2018spectre}, both academia and industry have proposed a broad spectrum of defenses~\cite{canella2019systematic}. Most related to this paper are software-based defenses, which fall into three categories.

    
    Architectural primitives such as \codefn{LFENCE}~\cite{intel2018lfence}, \codefn{IBRS}~\cite{intel2018ibrs}, or ARM's \codefn{CSDB}~\cite{arm2025csdb} serialize instructions to disable speculation completely. While effective when carefully placed, pervasive use of these primitives incurs prohibitive overhead.

    Observing the prohibitive overhead of disabling speculation completely, defense mechanisms like ones based on LAHF instruction~\cite{Oleksenko:2018} and Speculative Load Hardening (SLH)~\cite{slh} introduce artificial data dependencies between branch conditions and subsequent memory operations to ensure that speculative accesses cannot touch sensitive data before branch resolve. Notably, SLH~\cite{slh} was implemented in LLVM since version 8. To reduce the overhead of SLH, several optimized variants have been proposed, limiting instrumentation to necessary control flows only~\cite{baumann2025fslh,zhu2025lightslh,shivakumar2023selslh}. While these selective and lightweight variants share the same objective of removing unnecessary protection as \framework, they do not protect against sequential memory errors.

\section{Formalizing Memory Errors}
\label{sec:formalization}

Our goal is to provide \emph{dual memory safety}. A central challenge in achieving this goal is the need to precisely reason about both sequential and speculative memory behaviors, as well as their interactions. To address this challenge, we first define a small imperative language together with a formal semantics that captures both sequential and speculative memory errors. We then introduce a classification of memory behaviors into four categories, presenting each both informally and formally. This categorization provides a theoretical foundation for reasoning about dual memory safety (Section~\ref{sec:defense}).

\subsection{Language Syntax}

\begin{figure}
    \centering \small
    \begin{equation*}
        \begin{aligned}
            & \textbf{Expressions} \\
            & e ::= n \mid x \mid a[e] \mid e_1 \otimes e_2 \\
            & \textbf{Boolean Expressions} &\\
            & b ::= \True \mid \False \mid e_1 \odot e_2 \mid \lnot b \mid b_1 \lor b_2 \mid b_1 \land b_2 \\
            & \textbf{Command} &\\
            & \cmd ::= \Skip \mid x := e \mid a[e_1] := e_2  \mid  \cmd_1;\, \cmd_2 \mid \\
            & \ifcmd{b}{\cmd_1}{\cmd_2}  \mid \while{b}{\cmd}
        \end{aligned}
    \end{equation*}
    \vspace{-2ex}
    \caption{Language syntax. $\otimes$ and $\odot$ represent binary arithmetic and comparison operations respectively.}
    \label{fig:syntax}
\end{figure}

Following prior work~\cite{brotzman2021specsafe}, we consider a simple imperative language whose syntax is defined in \Cref{fig:syntax}. An expression can be a constant $n$, a variable $x$, a pointer arithmetic $a[e]$ (where $a[e]$ is an address computed with a base address $a$ and an offset $e$), or a binary arithmetic operation $\otimes$ on two expressions. 
The language captures assignment, branching, sequential composition, and memory operations in the form of $x := e$ and $a[e_1] := e_2$. Since $e$ by definition may contain a memory access operation $a[e']$, both memory loads and stores are modeled in the language syntax, and it supports nested array access like $\cod{B[A[x] * 4096]}$. We select a source-level language for formalization because of its readability and simplicity. Our code analysis is conducted on lower-level intermediate representation code, as detailed in \Cref{sec:evaluation}.

\subsection{Semantics}
\label{sec:semantics}

\textit{Memory model.} We model the memory state $\mem$ as a mapping from memory addresses to the values stored at those addresses. Additionally, we define a global range $\range$ as a mapping from memory objects $\obj$ to the allocated memory regions $\range(\obj)$, which are sets of continuous memory addresses that are considered safe for $\obj$ to access.  For instance, a fine-grained object-level memory safety mechanism, such as Baggy~\cite{akritidis2009baggy}, defines the range of an object as $\range(\obj) = [\base{\obj}, \base{\obj} + \size{\obj})$, where $\base{\obj}$ represents the base address of the object and $\size{\obj}$ denotes its allocated size. A coarse-grained memory safety mechanism, like SFI~\cite{kuznetzov2018safestack}, specifies the range of an object as $\range(\obj) = [\base{S}, \base{S} + \size{S})$, where $S$ is the sandbox that the object resides within, and $\base{S}$ and $\size{S}$ represent the base address and size of the sandbox $S$, respectively.

\textit{Program state.} Following~\cite{brotzman2021specsafe,guarnieri2020spectector}, the speculative-aware language semantics is parameterized on a global \emph{prediction oracle}, $\oracle$, a partial function that takes the location of a branch command and returns a queue of predicted branch outcomes (i.e., a queue of truth values). Moreover, it assumes a fixed transaction window size $w$, modeling the size of the reordering buffer in processor. Speculatively executed instructions are executed under a \emph{speculative transaction}~\cite{guarnieri2020spectector}, which is ultimately either committed (when the prediction is correct) or rolled back (when the prediction is incorrect)\footnote{Each transaction tracks information like the memory state and command to resume in case of a rollback. Since the rest of the paper does not depend on its precise definition, we leave the full definition in the Appendix.}. 

We define program state as a tuple $\configthree{\mem}{\vec{\micro}}{\cmd}$ that consists of a memory state $\mem$, a sequence of (potentially nested) speculative transactions $\vec{\micro}$ as well as $\cmd$, the remaining program to be executed. A single-step evaluation rule has the form of:
\[\configthree{\mem}{\vec{\micro}}{\cmd} \evalsto{\event} \configthree{\mem'}{\vec{\micro}'}{\cmd'}\]
where event $\event$ consists of (1) $\start{i}$, $\rollback{i}$, and $\commit{i}$, which denote the start, rollback, and commit of the $i$th transaction respectively, and (2) $\load{\location}{a[n]}$ and $\store{\location}{a[n]}$ indicate the loading and storing of memory address $a[n]$ at program location $\location$. Given a global prediction oracle and a fixed window size, the semantics is deterministic and yields a single execution trace that captures both the sequential and speculative behaviors of the program. 
Since the semantics mostly follows prior work~\cite{brotzman2021specsafe}, we omit 
the evaluation rules here and formalize them in the Appendix.



\textit{Speculative traces and trace projection.} An execution of the entire program generates a trace of events, which consist of both memory operations and speculative transaction labels (i.e., $\start{i}$, $\rollback{i}$, and $\commit{i}$) around those operations. Let $\cmd$ represent a program and $\oracle$ represent an oracle that describes the behavior of a branch predictor. We write $\configthree{\mem}{\vec{\micro}}{\cmd} \hookrightarrow_{\oracle} \tau$ to represent that a trace $\tau$ is generated from an initial state $\configthree{\mem}{\vec{\micro}}{\cmd}$ given an oracle $\oracle$.

For instance, consider a code snippet:
\begin{lstlisting}
if (x < 8)
  y = A[x];
else
  y = A[0];
\end{lstlisting}
Assume we have $\size{\texttt{A}}=8$ and $\texttt{x} = 10$ in the initial program state $\state_0$. With an oracle that always mispredicts a branch~\cite{guarnieri2020spectector}, the speculative trace of this program is
\begin{equation*}
\trace = \start{0}~\load{2}{A[10]}~\rollback{0}~\load{4}{A[0]}
\end{equation*}
From the trace, it is evident that $\cod{A[10]}$ causes a memory error, but only under speculative execution. 

To analyze the sequential and speculative behaviors of a program’s execution, as well as their interactions, we extract two projections from $\trace$: sequential and speculative projections.

\begin{definition}[Sequential Projection] The sequential projection of a trace $\trace$, written as $\projection{\seq}$, is the longest subsequence of trace $\trace$, excluding all rolled-back speculative transactions, i.e., all operations enclosed by a matching pair of $\start{i}$ and $\rollback{i}$. 
\end{definition}

In the example above, the sequential projection is computed by removing the speculative transaction between $\start{0}$ and $\rollback{0}$, resulting in $\projection{\seq} = \load{4}{A[0]}$. The sequential projection corresponds exactly to the trace that is produced by executing the program under sequential semantics.

\begin{definition}[Speculative Projection] 
A speculative projection of a trace $\trace$, written as $\projection{\mathghost}$, is the complement of sequential projection $\projection{\seq}$, i.e., the longest subsequence of trace $\trace$ with operations enclosed by a matching pair of $\start{i}$ and $\rollback{i}$. 
\end{definition}

In the example above, we have $\projection{\mathghost} = \load{2}{A[10]}$, indicating that the memory error due to out-of-bound access $\cod{A[10]}$ is only executed under a mis-prediction.

Note that the trace projections above only classify \emph{each memory access} to be either sequential or speculative. However, for developing memory safety mechanisms, we are more interested in whether \emph{each program location} can produce sequential memory errors, speculative memory errors, or both. We develop such categorizations on memory locations next.

\subsection{Categorization}
\label{sec:categorization}

To reason about the memory-related behavior of each program location, we write $\exec{\cmd}$ to denote all possible sequential executions of $\cmd$ (i.e., $\{ \projection{\seq} \mid \exists \mem, \oracle.~\configthree{\mem}{\emptyset}{\cmd} \hookrightarrow_{\oracle} \trace \}$), and $\specexec{\cmd}$ to denote its complement (i.e., executions of $\cmd$ under speculation). Note that by definition, both $\exec{\cmd}$ and $\specexec{\cmd}$ consist of a set of memory accesses in the form of $\load{\location}{\obj[i]}$ or $\store{\location}{\obj[i]}$, where $\location$ is the program location that issues the memory access to $\codevar{\obj[i]}$.

For each program location $\location$, we define its \emph{Access Set} in trace $\trace$ as the set of all memory loads and stores issued by $\location$:
\begin{definition}[Access Set] Given a program location $\location$, its access set ($\accessset$) in a program execution trace $\trace$ is:
\begin{equation}
\label{equ:accset}
\begin{aligned}
    \accessset(\trace, \location) = &\{ \codevar{\obj[i]} \mid \load{\location}{\codevar{\obj[i]}} \in \trace  \} \\
     \cup &\{\codevar{\obj[i]} \mid \store{\location}{\codevar{\obj[i]}} \in \trace  \}  
\end{aligned}
\end{equation}
\end{definition}

Note that access set is defined on any execution trace $\trace$, which consists of both sequential and speculative memory accesses, as well as their combinations. Moreover, if a program location $l$ can legally index into multiple objects (e.g., via pointer aliasing), the definition includes all memory accesses to those objects.

Next, we define memory access that \emph{might} violate memory safety at each program location.

\begin{definition}[Sequential Violation Region] A sequential violation region of a program location $\location$, $\vioregionseq(\location)$, is the set of memory addresses that $\location$ can access out-of-bounds during sequential execution.
\begin{equation*}
\label{equ:vioregion}
\vioregionseq(l) = \bigcup_{\trace \in \exec{\cmd}} \{\codevar{\obj[i]} \mid \codevar{\obj[i]} \in \accessset(\trace, \location) \land \codevar{\obj[i]} \notin \range(\codevar{\obj})\}
\end{equation*}
\end{definition}

\begin{definition}[Speculative Violation Region] A speculative violation region of a program location $\location$, $\vioregionspec(\location)$, is the set of memory addresses that $\location$ can access out-of-bounds during speculative execution.
\begin{equation*}
\label{equ:vioregionspec}
\vioregionspec(l) = \bigcup_{\trace \in \specexec{\cmd}} \{\codevar{\obj[i]} \mid \codevar{\obj[i]} \in \accessset(\trace, \location) \land \codevar{\obj[i]} \notin \range(\codevar{\obj})\}
\end{equation*}
\end{definition}


Next, we formally categorize each program location into one of the four categories based on whether its violation regions are empty. Although the categorization appears straightforward, it serves as a formal foundation for identifying the \emph{minimal requirements} on each program location to achieve dual memory safety, which we elaborate in~\Cref{sec:defense}.

\begin{definition}[Categorization]
\label{def:cat}
Given a program location $\location$, categorization ($\mathcal{C}$) is a mapping that classifies $\location$ into one of the four categories, S$_\seq$S$_{\mathghost}$, U$_\seq$S$_{\mathghost}$, S$_\seq$U$_{\mathghost}$ and U$_\seq$U$_{\mathghost}$, where $S$ stands for safe, and $U$ stands for unsafe.
\end{definition}
\begin{equation}
\label{equ:cat}
\mathcal{C}(l) = 
    \begin{cases}
        \text{S}_\seq \text{S}_{\mathghost}, \vioregionseq(l)=\emptyset \land \vioregionspec(l)=\emptyset \\
        \text{U}_\seq \text{S}_{\mathghost}, \vioregionseq(l) \neq \emptyset \land \vioregionspec(l)=\emptyset \\
        \text{S}_\seq \text{U}_{\mathghost}, \vioregionseq(l)=\emptyset \land \vioregionspec(l) \neq \emptyset \\
        \text{U}_\seq \text{U}_{\mathghost}, \vioregionseq(l) \neq \emptyset \land \vioregionspec(l) \neq \emptyset
    \end{cases}
\end{equation}

\begin{figure}
    \centering
    \includegraphics[width=0.35\linewidth]{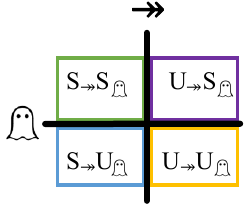}
    \vspace{-2ex}
    \caption{Categorization of memory behaviors}
    \label{fig:category}
\end{figure}

Intuitively, the categorization classifies each memory-accessing location $\location$ into one of four regions shown in \Cref{fig:category}.

\begin{itemize}
    \item S$_\seq$S$_{\mathghost}$: dual memory safe.
    \item U$_\seq$S$_{\mathghost}$: sequentially unsafe but speculatively safe.
    \item S$_\seq$U$_{\mathghost}$: sequentially safe but speculatively unsafe.
    \item U$_\seq$U$_{\mathghost}$: sequentially unsafe and speculatively unsafe.
\end{itemize}

We use the examples in~\Cref{code:categorize} to elaborate on each category. For all examples, assume array \codevar{A} has a size of 8.

\begin{figure}
    \centering
    \begin{subfigure}[b]{0.23\textwidth}
    \centering
    \begin{lstlisting}
y = A[3];
if(i < 8)
    y = A[3];
    \end{lstlisting}
    \vspace{-2ex}
    \caption{Always safe}
    \label{code:cat1}
    \end{subfigure}
\hfill
\begin{subfigure}[b]{0.23\textwidth}
\centering
\begin{lstlisting}
y = A[i];
if(i >= 8)
    y = A[i]
\end{lstlisting}
\vspace{-2ex}
\caption{$\seq$ Unsafe, $\mathghost$ Safe}
\label{code:cat3}
\end{subfigure}

    \begin{subfigure}[b]{0.23\textwidth}
    \centering
    \begin{lstlisting}
i = 10;
if(i < 8)
    y = A[i];
    \end{lstlisting}
    \vspace{-2ex}
    \caption{$\seq$ Safe, $\mathghost$ Unsafe}
    \label{code:cat2}
    \end{subfigure}
\hfill
    \begin{subfigure}[b]{0.23\textwidth}
    \centering
    \begin{lstlisting}
i = 0;
while (i < 10)
    y = A[i];
    i = i + 1;
    \end{lstlisting}
    \vspace{-2ex}
    \caption{Both unsafe}
    \label{code:cat4}
    \end{subfigure}
    \vspace{-1ex}
    \caption{Motivating examples, where the size of array $\codevar{A}$ is 8.}
    \label{code:categorize}
\end{figure}

\textit{Case S$_\seq$S$_{\mathghost}$.}
\Cref{code:cat1} shows an array access using a constant index that is smaller than the array size. Both memory reads at Line 1 and 3, which are under a branch, are always memory safe as the index is set to 3. The intuition can be confirmed with \Cref{def:cat}, as violation regions $\vioregionseq(1)$, $\vioregionspec(1)$, $\vioregionseq(3)$ and $\vioregionspec(3)$ are all empty. Since the code is always safe, no memory-safety defense is required.

\textit{Case U$_\seq$S$_{\mathghost}$.} One instance of U$_\seq$S$_{\mathghost}$ is an out-of-bound access that occurs only under sequential execution, such as when the execution is not under any branches. \Cref{code:cat3} presents a contrived yet more interesting and counterintuitive example where the access is guarded by a branch. 

When non-negative $i$ is smaller than 8, both the sequential and speculative executions read $\codevar{A[i]}$ which is memory safe. However, when $i\geq 8$, say 10, the sequential execution reads $\codevar{A[10]}$ twice, which is out-of-bound, while the speculative execution does nothing after the first access at Line 1. We can see this more clearly with its execution trace under the always-mispredict oracle:
\begin{equation*}
\begin{aligned}
\trace =  &\load{1}{A[10]} \ \start{0} \ \rollback{0} \\
          & \start{1} \ \load{3}{A[10]} \ \commit{1}
\end{aligned}
\end{equation*}
For this trace, $\projection{\seq}=\{\load{1}{A[10]}, \load{3}{A[10]}\}$, which are both out of bound, and $\projection{\mathghost}=\emptyset$. Hence, for $l=1$ or $3$, we have $\vioregionseq(l)\neq \emptyset$ and $\vioregionspec(l)=\emptyset$. So both Line 1 and 3 are categorized as U$_\seq$S$_{\mathghost}$.
As for defense, such program locations require sequential memory safety defenses at minimum, given that the defensive instrumentation itself does not introduce additional memory-safety violations (\Cref{sec:specmemsafe}). 

\textit{Case S$_\seq$U$_{\mathghost}$.} This category includes classic Spectre v1 attacks, where out-of-bound memory access only occurs under a misprediction. 
\Cref{code:cat2} shows a code snippet where the array access $\codevar{A[i]}$ is guarded by a bounds check at Line 2. Under sequential semantics, the access at Line 3 is never reached and thus free from memory errors. But with speculative execution, Line 3 can be speculatively executed and access $\codevar{A[10]}$, which is out-of-bounds.  We can see this more clearly with its execution trace:
\begin{equation*}
\trace = \start{0} \ \load{3}{A[10]} \ \rollback{0}
\end{equation*}
For this trace, we have $\projection{\seq}=\emptyset$ and $\projection{\mathghost}=\load{3}{A[10]}$, which is out of bound. Hence, $\vioregionseq(3)=\emptyset$ but $\vioregionspec(3)\neq \emptyset$.
As the memory error only occurs during speculative execution, defenses against Spectre V1 attacks, such as adding $\codefn{LFENCE}$ or applying SLH, suffice to remove the speculative memory error, given that they do not introduce additional memory-safety violations (\Cref{sec:specmemsafe}).




\textit{Case U$_\seq$U$_{\mathghost}$.} \Cref{code:cat4} shows an example that is both sequentially unsafe and speculatively unsafe. Under sequential execution, the memory access at Line 3 will access \codevar{A[8]} and \codevar{A[9]} out-of-bounds in the last two loop iterations because of the incorrect bounds check at Line 2. Moreover, under misprediction, iterations after that might access \codevar{A[i]} where $i \ge 10$ under misprediction, which leads to speculative memory error as well. We can see this more clearly with the following execution trace where only the last prediction when $i=10$ has a misprediction:
\begin{equation*}
\begin{aligned}
    \trace = &\start{0} \ \load{3}{A[0]} \ \commit{0} \\ 
    &... \ \\
    &\start{9} \ \load{3}{A[9]} \commit{9} \\ 
    &\start{10} \ \load{3}{A[10]} \ \rollback{10} 
\end{aligned}
\end{equation*}
Hence,  $\projection{\seq}=\{\load{3}{A[0]}, \dots, \load{3}{A[9]}\}$ and $\projection{\mathghost}=\{\load{3}{A[10]}\}$. So $\vioregionseq(3) = \{ \codevar{A[8]}, \codevar{A[9]} \}$ and $\vioregionspec(3) = \{ \codevar{A[10]} \}$. 

\section{Dual Memory Safety}
\label{sec:defense}

One important impact of the categorization of program locations is to enable formal and precise reasoning about dual memory safety, which is formalized as follows.
\begin{definition}[Dual Memory Safety]
\label{def:dualsafe}
A program satisfies dual memory safety if it is free of all memory violation regions: 
\[\forall \location \in \cmd.~\vioregionseq(\location) = \emptyset \land \vioregionspec(\location) = \emptyset\]
\end{definition}

\subsection{Composition Theorem}
\label{sec:compositon}
To ensure dual memory safety, one challenging question is how to combine, or modify existing memory-safe mechanisms with partial protection. As discussed in \Cref{sec:motivation}, a naive composition of existing defense mechanisms does not necessarily provide dual safety. Moreover, the resulting performance overhead can be prohibitive.

To address the challenge, we first model each memory safety mechanism as a program rewriting process, and use $\defenseseq{\location}$ and $\defensespec{\location}$ to denote defense mechanisms for sequential and speculative memory errors respectively. A sound memory-safe mechanism should possess the following property:
\begin{definition}[Sound Memory-Safety Mechanism]
\label{def:soundseq}
\label{def:soundspec}
A sequential memory-safe mechanism is sound if it removes all sequential violation locations:
$\forall \location\in \cmd. ~\vioregionseq(\defenseseq{\location})=\emptyset $.\\
Likewise, a speculative memory-safety mechanism is sound if it removes all speculative violation locations:
$\forall \location\in \cmd. ~\vioregionspec(\defensespec{\location})=\emptyset $.
\end{definition}


To systematically reason about the composition of memory-safe mechanisms, we identify an essential condition that makes the composition of two memory-safety mechanisms dual memory safe. We term this essential property \emph{safety preservation}. 
\begin{definition}[Safety Preservation]
\label{def:safety}
A sequential memory-safety mechanism $\defenseseq{}$ preserves memory safety if it does not introduce new speculative violations: $\forall \location\in \cmd.~ \vioregionspec(\defenseseq{\location}) \subseteq \vioregionspec(\location)$.

Similarly, a speculative memory-safe mechanism $\defensespec{}$ preserves memory safety if
it does not introduce new sequential violations:
$\forall \location\in \cmd.~\vioregionseq(\defensespec{\location}) \subseteq \vioregionseq(\location)$.
\end{definition}

Note that \Cref{def:safety} quantifies over each program location  $\location$ in $\cmd$, the program \emph{prior to rewriting}. Since a given mechanism might instrument multiple instructions at a single location $\location$, the rewritten code, e.g., $\defenseseq{\location}$, is a set of instructions. For notational convenience, we overload $\vioregionspec(\defenseseq{\location})$ to denote the \emph{union} of violation sets associated with all commands in $\defenseseq{\location}$.

Next, we prove the composition theorem.

\begin{theorem}[Composition]
\label{thm:composition}
For any program $\cmd$ and two sound memory-safe mechanisms $\defenseseq{}$ and $\defensespec{}$:
\begin{itemize}
    \item $\defensespec{}$ is safety preserving $\Rightarrow$ $\defensespec{\defenseseq{\cmd}}$ is dual memory safe.

    \item $\defenseseq{}$ is safety preserving $\Rightarrow$ $\defenseseq{\defensespec{\cmd}}$ is dual memory safe.
\end{itemize}
\end{theorem}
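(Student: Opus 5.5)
The proof is a direct unfolding of \Cref{def:dualsafe}, \Cref{def:soundseq} and \Cref{def:safety}. By \Cref{def:dualsafe}, we must show that every location of the composed program has both violation regions empty. The two bullets are symmetric: swap the roles of $\seq$ and $\mathghost$. So we prove the first and obtain the second by the same argument.

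For the first bullet, let $\cmd' = \defenseseq{\cmd}$. Every location of the final program $\defensespec{\cmd'}$ lies in $\defensespec{\location}$ for some location $\location$ of $\cmd'$. Using the overloaded notation that takes the union over the instructions produced at $\location$, it suffices to show, for every $\location \in \cmd'$,
\[
\vioregionspec(\defensespec{\location}) = \emptyset
\quad\text{and}\quad
\vioregionseq(\defensespec{\location}) = \emptyset .
\]
The speculative half is immediate: soundness of $\defensespec{}$ (\Cref{def:soundspec}) applied to the program $\cmd'$ gives $\vioregionspec(\defensespec{\location})=\emptyset$.

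For the sequential half, safety preservation of $\defensespec{}$ (\Cref{def:safety}), applied to $\cmd'$, gives
\[
\vioregionseq(\defensespec{\location}) \subseteq \vioregionseq(\location).
\]
Here $\location$ is a location of $\cmd'=\defenseseq{\cmd}$, so it belongs to $\defenseseq{\location_0}$ for some $\location_0\in\cmd$. Soundness of $\defenseseq{}$ gives $\vioregionseq(\defenseseq{\location_0})=\emptyset$, and since this is a union over the instructions of $\defenseseq{\location_0}$, it follows that $\vioregionseq(\location)=\emptyset$. Chaining the two facts yields $\vioregionseq(\defensespec{\location})=\emptyset$.

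The second bullet is dual. Soundness of $\defenseseq{}$, applied to $\defensespec{\cmd}$, kills all sequential regions. Safety preservation of $\defenseseq{}$ bounds each speculative region by the corresponding region of $\defensespec{\cmd}$, and soundness of $\defensespec{}$ makes that region empty.

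The main obstacle is bookkeeping, not mathematics. The definitions quantify over "$\location \in \cmd$", meaning the program before rewriting, and violation regions are computed relative to whichever whole program the location sits in. I will state explicitly that soundness and safety preservation are assumed for arbitrary input programs, in particular for the intermediate program $\defenseseq{\cmd}$. I will also state that the region of a location of $\defenseseq{\cmd}$, computed in $\defenseseq{\cmd}$, is contained in the union $\vioregionseq(\defenseseq{\location_0})$. With these two conventions fixed, every step is a one-line subset argument.
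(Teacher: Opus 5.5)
Your proof is correct and follows the paper's argument: speculative safety comes directly from soundness of the outer mechanism, and sequential safety comes from chaining the outer mechanism's safety preservation with the inner mechanism's soundness over each instruction produced at a location, with the second bullet by symmetry. Your remark that soundness and safety preservation must be invoked on the intermediate program $\defenseseq{\cmd}$ makes explicit a convention the paper's proof uses implicitly.
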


\begin{proof}
We prove the first bullet as the other case is symmetric.

\begin{itemize}
    \item For speculative memory safety, we consider an arbitrary  location $\location\in \cmd$. Then $\vioregionspec(\defensespec{\defenseseq{\location}})=\emptyset$ since $\defensespec{}$ is sound. 

    \item For sequential memory safety, we consider an arbitrary  location $\location\in \cmd$. Then $\defenseseq{\location}$ represents the commands \emph{after} applying $\defenseseq{}$ to $\location$. 
    
    Since $\defensespec{}$ is safety preserving, $\forall \location'\in \defenseseq{\location}.~\vioregionseq(\defensespec{\location'}) \subseteq 
    \vioregionseq(\location')$. Moreover, since $\defenseseq{}$ is sound, we have 
    $\vioregionseq(\location')=\emptyset$ for each $\location'$.  Hence, $\bigcup_{\location'\in \defenseseq{\location}} \vioregionseq(\defensespec{\location'}) = \emptyset $, or equivalently, $\vioregionseq(\defensespec{\defenseseq{\location}}) = \emptyset $. Since this holds for any $\location\in \cmd$, $\defensespec{\defenseseq{\cmd}}$ is sequentially memory safe.
\end{itemize}
\end{proof}

\subsection{Systematic Analysis of Memory Safety Mechanisms}
\label{sec:properties}
To demonstrate the practical relevance of \Cref{thm:composition}, we first investigate the safety properties of existing mechanisms, summarized in \Cref{tab:property}. We select a variety of representative memory safety mechanisms, ranging from object-based bounds checking (Baggy~\cite{akritidis2009baggy}), pointer-based bound checking (Delta Pointers~\cite{kroes2018delta}),  sandboxing (SafeStack~\cite{kuznetzov2018safestack}), Spectre defense (SLH~\cite{slh}), to hardware-aided (SpecASan~\cite{ganjisaffar2025specasan} and CHERI~\cite{woodruff2014cheri, fuchs2024safe}) mechanisms. We subsequently show that our newly developed composition theory significantly streamlines the analysis of dual safety.

\textit{Baggy is not safety-preserving.} Baggy introduces branches in the transformed code, which may expand the speculative violation region. Recall the example in \Cref{sec:motivation}, where Line 1 $\cod{y=A[x]}$ is transformed to a branch-based bounds check. Before applying Baggy, 
$\vioregionspec(1) = \emptyset$ as Line 1 is not guarded by any branch. However, with Baggy, it is sequentially sound as $\vioregionseq(\defenseseq{1}) = \emptyset$. However, $\vioregionspec(\defenseseq{1}) = \{\cod{A[8]}, \cod{A[9]}, ...  \}$ due to the branch-based bounds check added by Baggy. Hence, Baggy is not safety-preserving, as $\vioregionspec(\defenseseq{1}) \not\subseteq \vioregionspec(1)$.

\textit{Delta Pointers is not safety-preserving.} Delta Pointers~\cite{kroes2018delta} is a pointer-based defense that divides a 64-bit virtual address into three parts, the most significant bit (MSB), a 31-bit delta tag, and a 32-bit address.

\vspace{-3ex}
\begin{figure}[htp]
    \centering
    \includegraphics[width=0.6\linewidth]{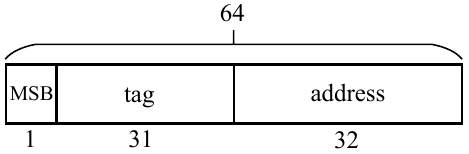}
    \label{fig:address}
\end{figure}
\vspace{-3ex}

The tag bits encode the distance between the current pointer value and the upper bound of the associated object. 
During pointer arithmetic, the tag is updated relative to the address change. When the pointer moves out of bounds, by design, the arithmetic causes the tag to overflow into the MSB, which triggers a hardware exception as modern hardware enforces canonical address checks. 
For an memory access such as \codevar{y = A[x]} (where the size of \codevar{A} is 8), the instrumented code is
\begin{lstlisting}[
    linebackgroundcolor={%
    \ifnum\value{lstnumber}=2
            \color{blue!30}
    \fi
    \ifnum\value{lstnumber}=3
            \color{blue!30}
    \fi
    }
]
addr = &A[x];
new_addr = addr + (x << 32);
masked_addr = new_addr & ADDR_MASK;
y = load(masked_addr);
\end{lstlisting}
where \codevar{ADDR\_MASK} is the mask removing the tag bits.
When \codevar{x $\ge$ 8}, the addition at Line 2 overflows the tag to MSB, resulting a hardware exception at Line 4 since MSB is set.

Unlike Baggy, Delta Pointers does not introduce branch-based software checks. However, it relies on hardware canonical address checks to prevent dereferencing of invalid pointers, which can be bypassed under speculative execution on AMD CPUs~\cite{amdCVE, noncanonical, musaev2021transient}. \changes{Recent research~\cite{hertogh2024leaky} also found that Intel CPUs can be affected, but by Spectre v2 attacks, which is out-of-scope for this paper.} 
Consider the original access \codevar{y = A[x]}, say at line $l$, without Delta Pointers instrumentation, we have $\vioregionspec(l) = \emptyset$ as no mis-prediction occurs. However,  $\vioregionspec(\defenseseq{l}) = \{\cod{A[8]}, \cod{A[9]}, ...  \}$
since speculative execution may bypass the canonical address check at Line 4. Therefore, delta pointers is not safety-preserving, as $\vioregionspec(\defenseseq{l}) \not\subseteq \vioregionspec(l)$.

\textit{\safestack is safety-preserving.} \safestack~\cite{kuznetzov2018safestack} splits the program stack into two separate regions, where a safe stack stores safe objects whose accesses are all provably safe (i.e., $\vioregionseq(\location) = \emptyset$ for all $\location$ that may access the object), and an unsafe stack stores everything else. By design, it provides sequential (but not speculative) memory safety to those objects on the safe stack, and the protection is at the object granularity. 

For objects on the unsafe stack, however, \safestack enforces a coarse-grained software fault isolation (SFI) policy, ensuring that all accesses to unsafe objects remain within the bounds of the unsafe stack through the use of segment registers (e.g., \codevar{gs} and \codevar{fs}). The defense is dual safe for unsafe objects, but only at the granularity of the whole unsafe stack.

Moreover, because safe stack relies solely on base-and-offset arithmetic, it is safety-preserving as it does not introduce additional memory accesses under speculative execution.



\begin{table}[t]
\setlength{\tabcolsep}{3pt}
\centering
\small
\begin{tabular}{|c|c|c|c|}
\hline
Mechanism & Range & Safety & Preservation \\\hline
\hline
Baggy & Object & $\seq$ & No \\\hline
Delta Pointers & Object & $\seq$ & No \\\hline
\multirow{2}{*}{\safestack} & Safe: Object & $\seq$ & Yes \\
                             & Unsafe: Unsafe Stack$\star$ & Dual & Yes \\\hline
SLH & Object & $\mathghost$$\dagger$ & Yes \\\hline
SpecASan & Object & Dual & Yes \\\hline
CHERI & Object & Dual & Yes \\\hline
\hline
Baggy+SLH & Object & $\seq$ & No \\\hline
SLH+Baggy & Object & $\seq$ & No \\\hline
DeltaPointers+SLH & Object & $\seq$ & No \\\hline
SLH+DeltaPointers & Object & $\seq$ & No \\\hline
\multirow{2}{*}{SafeStack+SLH} & Safe: Object & Dual & Yes \\
                             & Unsafe: Unsafe Stack$\star$ & Dual & Yes \\\hline
\hline
\dualbaggy & Object & Dual & Yes \\\hline
\dualdp & Object & Dual & Yes \\\hline
\end{tabular}
\caption{Properties of Memory Safety Mechanisms. \\$\star$: SafeStack enforces a coarse-grained software-fault isolation policy on unsafe objects: accesses to unsafe objects are constrained within the unsafe stack. \\ 
$\dagger:$ SLH is speculative safe if the out-of-bound accesses stem from branch misprediction.}
\label{tab:property}
\end{table}

\textit{SLH is safety-preserving.} Recall the example in \Cref{sec:motivation}. SLH masks memory addresses that may become out-of-bounds under speculative execution using a dedicated register \codevar{mis}, which tracks the mis-speculation status. Accordingly, SLH is \emph{conditionally} sound when \codevar{mis} accurately reflects the status of the branch-prediction unit. However, bypassing hardware-enforced canonical address checks under speculative execution does not stem from branch misprediction. Consequently, applying SLH to such instructions does not yield sound speculative protection for these checks, instrumented by mechanisms like Baggy and Delta Pointers.

On the other hand, the masking operation does not change program semantics under sequential semantics, nor introducing new memory accesses. Hence, it ensures that $\vioregionseq(\defensespec{l}) \subseteq \vioregionseq(l)$. Therefore, SLH is safety-preserving. 

\textit{SpecASan is safety-preserving.} SpecASan~\cite{ganjisaffar2025specasan} extends the protection of ARM's hardware based defense, MTE, which enforces memory tag checks only under sequential execution, to also cover speculative execution. Hence, it is dual safe.  

SpecASan does not introduce new memory accesses during  enforcement.
So, SpecASan is safety-preserving.


\textit{CHERI is safety-preserving.} CHERI~\cite{woodruff2014cheri} replaces traditional integer pointers with \textit{capabilities} that include addresses, bounds, and permissions. Upon each memory access, bounds are directly extracted from the capability and compared against the address during both sequential and speculative executions. Hence, CHERI is dual safe. Moreover, CHERI does not add any new memory accesses. Thus, CHERI is safety-preserving.

\subsection{Dual safety properties.}
With \Cref{thm:composition}, it becomes straightforward to determine whether the composition of two mechanisms ensures dual safety. For example, applying Baggy or Delta Pointers, which are not safety preserving, on top of other mechanisms yields only sequential memory safety. One subtlety arises in compositions where SLH is applied on top of a sound sequential defense, since SLH is only conditionally sound. In particular, both Baggy and Delta Pointers rely on hardware enforcement through canonical address checks, for which SLH does not provide speculative protection. Consequently, applying SLH on top of Baggy or Delta Pointers does not guarantee dual memory safety.
%
In contrast, \safestack does not rely on Spectre-vulnerable hardware check, making ``\safestack+SLH'' a dual safe composition.

\subsection{Software-Based Dual Safe Mechanisms}
\label{sec:dualbaggy}

Based on the insights from dual safety analysis, we next present two \emph{novel} mechanisms that are dual memory safe.

\textit{\dualbaggy.}
%
%
Baggy allocates memory objects in fixed size slots, ensuring that each allocation size is a power of two. This design reduces out-of-bounds checks on pointer arithmetic to examining the high order bits of the pointer, which can be implemented via a bitwise XOR followed by a right shift.

Consider a 64 bit system with a 128 byte object. The lowest 5 bits encode the offset within a slot, while the highest 57 bits represent the object’s base address, uniquely identifying the allocation. The remaining 2 bits distinguish among the four slots allocated to this object.

\vspace{-3ex}
\begin{figure}[htp]
    \centering
    \includegraphics[width=0.6\linewidth]{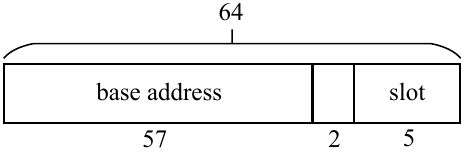}
    \label{fig:address}
\end{figure}
\vspace{-3ex}


Motivated by SLH, \dualbaggy removes the problematic branch-based and hardware-based checks in Baggy via a branchless \codefn{select}, which sets a computed address to -1 if the computed address goes out of bounds; otherwise, the pointer remains the same. Recall the example in \Cref{sec:motivation}, the transformed code by \dualbaggy is as follows. 


\begin{lstlisting}[
    linebackgroundcolor={%
    \ifnum\value{lstnumber}=2
            \color{blue!30}
    \fi
    \ifnum\value{lstnumber}=3
            \color{blue!30}
    \fi
    \ifnum\value{lstnumber}=4
            \color{blue!30}
    \fi
    }
]
addr = &A[x];
log_size = size_table[A >> log_slot_size];
is_oob = (addr XOR A) >> log_size != 0
final_addr = select(is_oob, -1, addr);
y = load(final_addr);
\end{lstlisting}

\textit{\dualdp.}
Delta Pointers leverage integer overflow to detect out of bounds accesses. However, as discussed in \Cref{sec:properties}, it relies on hardware checks that are vulnerable under speculative execution.

To fix this, we present \dualdp, a variant of Delta Pointers that explicitly inspects the most significant bit in software and sets the resulting address to -1 when this bit is set. For a memory access such as \codevar{y = A[x]} (where the size of \codevar{A} is 8), the instrumented code is as follows.

\begin{lstlisting}[
    linebackgroundcolor={%
    \ifnum\value{lstnumber}=2
            \color{blue!30}
    \fi
    \ifnum\value{lstnumber}=3
            \color{blue!30}
    \fi
    \ifnum\value{lstnumber}=4
            \color{blue!30}
    \fi
    \ifnum\value{lstnumber}=5
            \color{blue!30}
    \fi
    }
]
addr = &A[x];
new_addr = addr + (x << 32);
masked_addr = new_addr & ADDR_MASK;
is_oob = masked_addr < 0;
final_addr = select(is_oob, -1, masked_addr);
y = load(final_addr);
\end{lstlisting}

By performing a signed comparison with zero at Line 4, \dualdp detects if the MSB is set. If the pointer is OOB, the \codevar{select} instruction transforms the address to -1 (an inherently invalid address with all bits as 1) before the load occurs. This ensures that even during speculative execution, the address used for the memory access is safely neutralized, providing a robust protection that bridges the gap between software metadata and hardware enforcement.\footnote{Delta Pointers focuses on overflow checks since they are deemed more common than underflow~\cite{kroes2018delta}. Our dual safe variant follows the same direction. Recent work has shown that strategically combining different defenses, i.e., Baggy and Delta Pointers, could achieve overall spatial safety while keeping the overhead low~\cite{george2024optisan}, which we leave as future work.}

\subsection{Selective Defense}
The composition theorem enables systematic reasoning about dual memory safety. However, it conservatively assumes that memory safety mechanisms are applied to \emph{all} program locations with a nonempty violation set, which can lead to excessive overhead. We next present a stronger result that enables sound and more efficient composition when both defense mechanisms are sound and safety preserving. Without loss of generality, we consider the composition in which $\defensespec{}$ is applied \emph{after} $\defenseseq{}$; by symmetry, the same reasoning applies when the order is reversed.

Our insight is that when both defense mechanisms are sound and safety-preserving, speculative safety mechanisms can be omitted at program locations that are initially free from speculative memory errors. This is intuitively safe as the safety-preserving sequential defense cannot introduce new speculative memory errors. Consider the code snippet in \Cref{code:cat3}, where the out-of-bounds accesses at Lines 1 and 3 only occur under the sequential semantics. With the selective composition theorem below, we can avoid applying speculative defenses, such as SLH, to achieve dual memory safety.

More precisely, given sound and safety-preserving $\defensespec{}$ and $\defenseseq{}$, we define their selective composition as follows, where mechanisms are omitted on program locations that are safe:
\[\defenseselective{\location}=\begin{cases}
            \location, & \text{if $\vioregionspec(\location)=\emptyset=\vioregionseq(\location)$}\\
			\defenseseq{\location}, & \text{if $\vioregionspec(\location)=\emptyset$}\\
            \defensespec{\location}, & \text{if $\vioregionseq(\location)=\emptyset$}\\
            \defensespec{\defenseseq{\location}}, & \text{otherwise}
		 \end{cases}\]

\begin{theorem}[Selective Composition]
\label{thm:selective}
Let $\defenseselective{}$ be a selective speculative defense based on sound and safety-preserving mechanisms $\defenseseq{}$ and $\defensespec{}$.  $\defenseselective{c}$ is dual memory safe.
\end{theorem}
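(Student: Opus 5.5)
The plan is a location-wise case analysis that follows the four branches in the definition of $\defenseselective{}$. For each branch we establish both $\vioregionseq(\defenseselective{\location})=\emptyset$ and $\vioregionspec(\defenseselective{\location})=\emptyset$. The tools are the two properties the statement assumes: soundness (\Cref{def:soundseq}) and safety preservation (\Cref{def:safety}). As in the proof of \Cref{thm:composition}, the violation regions of a rewritten location are read as the union over all commands it produces. Dual memory safety of $\defenseselective{\cmd}$ then follows from \Cref{def:dualsafe}, since every location of $\cmd$ is covered by exactly one case.

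The four cases are handled as follows.

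\textbf{Case $\vioregionspec(\location)=\emptyset=\vioregionseq(\location)$.} The location is left unchanged, so both regions are empty by hypothesis.

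\textbf{Case $\vioregionspec(\location)=\emptyset$, $\vioregionseq(\location)\neq\emptyset$.} Here only $\defenseseq{}$ is applied. Soundness gives $\vioregionseq(\defenseseq{\location})=\emptyset$. Safety preservation of $\defenseseq{}$ gives $\vioregionspec(\defenseseq{\location})\subseteq\vioregionspec(\location)=\emptyset$.

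\textbf{Case $\vioregionseq(\location)=\emptyset$, $\vioregionspec(\location)\neq\emptyset$.} This is symmetric to the previous case. Soundness of $\defensespec{}$ empties the speculative region. Safety preservation of $\defensespec{}$ bounds the sequential region by $\vioregionseq(\location)=\emptyset$.

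\textbf{Case both regions nonempty.} Here we reuse the argument of \Cref{thm:composition} verbatim, restricted to the single location $\location$. That argument needs only soundness of both mechanisms and safety preservation of $\defensespec{}$, all of which we have.

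The main obstacle is that violation regions are defined over executions of the \emph{whole} program, not of an isolated location. The rewriting done at one location may therefore change reachability or values at other locations. In the selective composition, different locations receive different mechanisms. So we must justify that, say, leaving an untouched location $\location$ next to a $\defenseseq{}$-rewritten neighbour does not enlarge $\vioregionspec(\location)$.

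My approach would be to treat the selective composition as a sequence of single-location rewrites and apply the safety preservation hypothesis to each step. \Cref{def:safety} is stated for all locations of the input program, so I read it as a global monotonicity property: applying $\defenseseq{}$ (resp. $\defensespec{}$) anywhere never enlarges any speculative (resp. sequential) region. Soundness would be read likewise, as emptying the relevant region at every rewritten location regardless of context.

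Under this reading, an induction over the rewriting steps keeps the invariant of each case intact. If this global reading proves too strong, the fallback is to adopt the paper's location-wise convention. \Cref{thm:composition} implicitly uses that convention: it treats $\vioregionseq(\defensespec{\defenseseq{\location}})$ as depending only on the rewriting at $\location$. Under it, the four-case analysis above is already the complete proof.
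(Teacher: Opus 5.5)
Your four-case analysis is the paper's proof. The untouched case is trivial, and the doubly-unsafe case is inherited from \Cref{thm:composition}. The two single-mechanism cases use soundness of the applied mechanism together with its safety preservation, e.g.\ $\vioregionspec(\defenseseq{\location})\subseteq\vioregionspec(\location)=\emptyset$. The paper works under exactly the location-wise convention you list as your fallback, so that version of your argument is complete in the sense the paper intends.

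One caution concerns your primary way of handling the ``obstacle'': the global-monotonicity reading does not by itself make the induction go through. Safety preservation of $\defenseseq{}$ only controls \emph{speculative} regions, and soundness only controls the locations that are actually rewritten. So nothing in that reading stops a $\defenseseq{}$ rewrite at one location from enlarging the \emph{sequential} region of a location that is untouched or receives only $\defensespec{}$. Dually, a $\defensespec{}$ rewrite elsewhere could enlarge the speculative region of an untouched or $\defenseseq{}$-only location. Closing that route would require an extra non-interference hypothesis: rewriting at one location never enlarges violation regions at other locations. That hypothesis is precisely what the location-wise convention builds in. So your fallback, not the global reading, is what actually carries the proof.
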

\begin{proof}
The first case is trivial and the last case directly comes from~\Cref{thm:composition}. For the second case, we only need to show that $\defenseseq{\location}$ is dual memory safe for program locations where $\vioregionspec(\location)=\emptyset$. Since $\defenseseq{}$ is sound, we know that $\vioregionseq(\defenseseq{\location})=\emptyset$. Moreover, we have $\vioregionspec(\defenseseq{\location})\subseteq \vioregionspec(\location)=\emptyset$, where the first subset relation holds as $\defenseseq{}$ is safety preserving. The third case is the dual of the second case.
\end{proof}

\subsection{Selective Dual Memory Safe Mechanisms}
\label{sec:seldefences}

To demonstrate the practical relevance of \Cref{thm:selective}, we consider three selective compositions that yield dual safety. In each composition, we first apply a dual safe mechanism to fix all sequential memory errors, i.e., program locations categorized as U$_{\seq}$S$_{\mathghost}$ and U$_{\seq}$U$_{\mathghost}$. Since the mechanism is dual safe, those locations are exempt from SLH protection. Then, we only apply SLH to program locations categorized as S$_{\seq}$U$_{\mathghost}$, which is safe due to \Cref{thm:selective}, as  $\vioregionseq(\location)=\emptyset$ for those locations. Therefore, no redundant protection is introduced in the following three compositions:

\begin{itemize}
    \item \baggyselslh: \dualbaggy is applied to program locations categorized as U$_{\seq}$S$_{\mathghost}$ and U$_{\seq}$U$_{\mathghost}$; SLH is applied to those categorized as S$_{\seq}$U$_{\mathghost}$.

    \item \dpselslh: in this configuration, \dualdp is applied to program locations categorized as U$_{\seq}$S$_{\mathghost}$ and U$_{\seq}$U$_{\mathghost}$; SLH is applied to those categorized as S$_{\seq}$U$_{\mathghost}$.

    \item \sfiselslh: 
    we put objects that are accessed by program locations categorized as U$_{\seq}$S$_{\mathghost}$ and U$_{\seq}$U$_{\mathghost}$ into the unsafe stack. Hence, SafeStack ensures dual memory safety for all accesses to the unsafe stack (\Cref{tab:property}). Moreover, SLH is applied to all program locations categorized as S$_{\seq}$U$_{\mathghost}$. 
    
\end{itemize}


One subtle point worth noting is that it may be tempting to place all unsafe objects (i.e., those accessed by program locations in U$_{\seq}$S$_{\mathghost}$, U$_{\seq}$U$_{\mathghost}$ and S$_{\seq}$U$_{\mathghost}$) onto the unsafe stack, and then apply \safestack as a dual memory safety mechanism for this stack. However, this approach does not preserve the original security guarantees of SafeStack, as illustrated by the following code fragment:
\begin{lstlisting}
x = A[0];
y = B[8];
if(i < 8)
    z = C[i];
\end{lstlisting}
Assume that arrays \codevar{A}, \codevar{B}, and \codevar{C} each have size 8. Under sequential execution, the accesses at Lines 1 and 4 are memory safe, while the access at Line 2 is unsafe. Accordingly, the original \safestack places \codevar{B} on the unsafe stack, and it places \codevar{A} and \codevar{C} on the safe stack. This placement ensures object level memory safety for arrays \codevar{A} and \codevar{C} (i.e., no out-of-bounds accesses can reach them), while any erroneous access to \codevar{B} remains confined to the unsafe stack. Our configuration of \sfiselslh extends the same guarantees to speculative execution.

Conservatively placing array \codevar{C} on the unsafe stack also ensures object level memory safety for \codevar{A}. However, this approach weakens the overall security guarantees, as \codevar{C} is no longer protected from accesses by other objects on the unsafe stack, given that they reside within the same sandbox.

\section{\analysis}
\label{sec:analysis}

We present \analysis, a framework that automatically enforces sound and efficient dual memory safety for target software, following the workflow shown in \Cref{fig:workflow}. \analysis begins with a two-phase analysis to identify vulnerable program locations. The first phase (\memphaseone) identifies locations that may exhibit sequential memory errors (i.e., those in U$_\seq$S$_{\mathghost}$ or U$_\seq$U$_{\mathghost}$), while the second phase (\memphasetwo) identifies locations that may exhibit either sequential or speculative memory errors (i.e., those in U$_\seq$S$_{\mathghost}$ or S$_\seq$U$_{\mathghost}$ or U$_\seq$U$_{\mathghost}$). Violations identified in both phases are then passed to a dual mitigation component, which applies dual memory safe mechanisms (\Cref{tab:property})  accordingly. 

\begin{figure}
    \centering
    \includegraphics[width=0.7\linewidth]{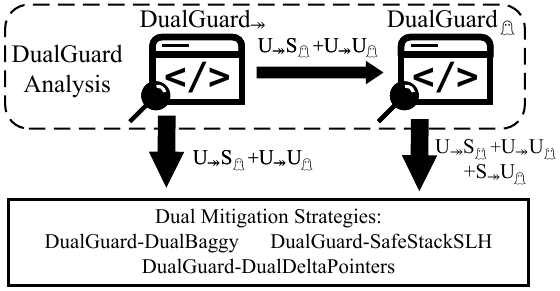}
    \vspace{-1ex}
    \caption{\framework Workflow}
    \label{fig:workflow}
\end{figure}

\subsection{Static Analysis}

Each analysis phase first collects all pointer arithmetic operations in the program, where potentially unsafe spatial accesses may occur, along with the sizes of the underlying memory objects being indexed. It then performs a forward data flow analysis to compute value ranges~\cite{huang2022taming, huang2024top} for numeric program variables, which are used to determine the possible offsets involved in pointer arithmetic operations.

In particular, we use a forward-may data-flow analysis to compute the range of each variable in the program, with the data-flow equation below:
\begin{equation}
     \begin{aligned}
         In(B) &= \bigcup_{B' \in pred(B)} Out(B') \\
         Out(B) &= (In(B) - kill(B)) \cup gen(B)
     \end{aligned}
    \label{equ:vr}
\end{equation}
Here, $In(B)$ is the value-range information at the entry point of basic block B, which is computed by taking the union of the range at the exit point of all its predecessors, i.e., $Out(B')$. 
When the range cannot be decided at compile time, the resulting range is conservatively estimated as unbounded.

$Out(B)$ denotes the range at the exit of basic block $B$. Its computation involves the sets $kill(B)$ and $gen(B)$, which capture the variables defined in $B$ together with their corresponding range information. By subtracting $kill(B)$ from $In(B)$ and then taking the union with $gen(B)$, the analysis removes outdated range information and incorporates the updated ranges introduced within $B$.

With all value-range information computed, the analysis determines whether an operation is safe at compile time by checking if the offset range lies within the bounds of the underlying object. Operations that violate this condition are marked as unsafe and reported accordingly.

Both \memphaseone and \memphasetwo follow the workflow described above, but differ in two aspects: (1) \memphaseone is path sensitive, and (2) \memphasetwo incorporates a speculative specific out-of-window analysis to exclude memory operations that are never executed speculatively.

\subsection{Path Sensitivity}

\memphaseone identifies memory errors under sequential semantics. Consequently, path conditions can be safely leveraged to restrict the range of variable values within conditional branches, enabling a path-sensitive analysis. In contrast, \memphasetwo operates under speculative-aware semantics, in which an instruction may be executed regardless of the corresponding path condition. As a result, the analysis must conservatively evaluate memory accesses without incorporating path conditions. This conservative treatment is intended to over-approximate all possible values that a variable may assume under speculative execution semantics.

\subsection{Out-of-Window Analysis}
\label{sec:oow}

\memphasetwo further incorporates an out-of-window analysis to eliminate instructions that cannot be executed speculatively. In modern processors, the number of speculatively executed instructions is bounded by the size of the reorder buffer. When the reorder buffer is full, speculative execution stalls until the corresponding branch condition is resolved. Consequently, each branch induces a finite speculation window within which instructions may execute speculatively. Instructions that fall outside the speculation windows of all branches are therefore never subject to speculative execution.

To identify these out-of-window instructions, \memphasetwo performs a reachability test from every branch instruction, up to the window size. Unreachable instructions are reported by \memphasetwo as speculatively memory safe.


\section{Evaluation}
\label{sec:evaluation}


\begin{figure*}[t]
    \centering
    \includegraphics[width=0.9\linewidth]{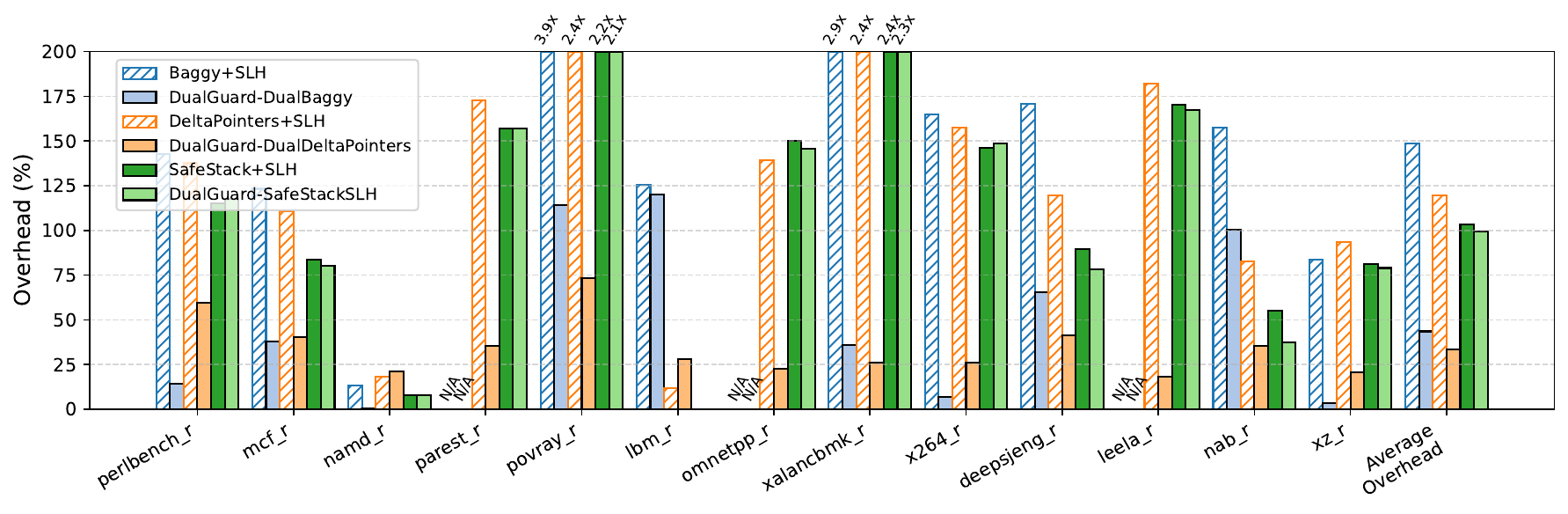}
    \vspace{-2ex}
    \caption{SPEC CPU 2017 overhead (Bars with the same color but different shades are directly comparable; bars with different colors represent different defense schemes and are not comparable with each other. Solid bars represent dual safe defenses, while bars with stripes are not dual safe.)}
    \label{fig:spec-overhead}
\end{figure*}

\subsection{Implementation}


We implement \analysis by extending the memory safety analysis of~\cite{huang2024comprehensive}, building on its LLVM-based infrastructure with approximately 1,500 additional lines of C++ code. Our extension also refines the value ranges of path conditions, improving the precision of sequential reasoning.
To leverage LLVM’s existing SLH infrastructure, which operates closer to the assembly level, \analysis inserts dummy instructions at vulnerable program locations identified by \memphasetwo, allowing them to be mitigated in later compilation stages.
The out-of-window analysis (\Cref{sec:oow}) is implemented in approximately 700 lines of code. Consistent with prior work~\cite{guarnieri2020spectector}, we set the window size to 200 instructions, providing a conservative estimate of the reorder buffer capacity in contemporary processors. \changes{We note that the window size is fully configurable in \framework: one can easily tune it to match the re-order buffer size of their target hardware.}

For both \safestack and SLH, we modify their implementations within the LLVM infrastructure so that they are applied to the potentially unsafe objects identified by our static analysis, rather than those selected by LLVM’s default heuristics. For Baggy, we build on an open source implementation from~\cite{george2024optisan} and implement a dual memory safe variant, \dualbaggy, as described in \Cref{sec:dualbaggy}.

For Delta Pointers, we build on its public repository~\cite{deltapointerrepo} and introduce two additional LLVM IR instructions, \codefn{icmp} and \codefn{select}. In later compilation stages, \codefn{select} is lowered to \codefn{cmov}, which uses the \codevar{EFLAGS} register to determine whether to perform a \codefn{mov} operation. The \codefn{icmp} instruction is optimized away, as the relevant \codevar{EFLAGS} are already set by the \codefn{and} instruction introduced by the original Delta Pointers mechanism. Hence, at the assembly level, only a branchless \codefn{cmov} is added, allowing \dualdp to remain efficient while providing stronger security guarantees.



\subsection{Evaluation Setup}

We evaluate \framework on two benchmark suites. The first suite is the widely used SPEC CPU 2017 benchmark suite. \changes{Of the 16 benchmarks fully implemented in C/C++, which can be analyzed by our \framework implementation, we successfully evaluated \framework on 13: perlbench\_r, mcf\_r, namd\_r, parest\_r, povray\_r, lbm\_r, omnetpp\_r, xalancbmk\_r, x264\_r, deepsjeng\_r, leela\_r, nab\_r, and xz\_r. Of the three excluded benchmarks, imagick\_r crashes under both Baggy and Delta Pointers because it intentionally performs out-of-bounds memory accesses; both tools are designed to abort execution upon detecting such spatial memory violations. We exclude gcc\_r and blender\_r because our context-sensitive whole-program analysis either times out or exhausts available memory. We leave improving the scalability of our prototype as future work.}

The second set of benchmarks is a collection of cryptographic libraries used in a prior study on SLH-based defenses~\cite{zhu2025lightslh}. The evaluation is performed on a server (Ubuntu 24.04.3 LTS) with an Intel(R) Xeon(R) 6517P CPU (3.2 GHz) and 256GB RAM.

We aim to answer the following research questions:
\begin{enumerate}
    \item[\textbf{Q1:}] What is the performance overhead among \framework's dual safe mechanisms? 
    
    \item[\textbf{Q2:}] How does the overhead of \framework compare to naive stacking of sequential and speculative defenses?
    
    \item[\textbf{Q3:}] Is the overhead of \framework comparable with state-of-the-art defenses that only provide either sequential or speculative safety? 
    
    \item[\textbf{Q4:}] How much do selective defense mechanisms (\Cref{sec:seldefences}) help to reduce overhead?
    
    \item[\textbf{Q5:}] Does \framework scale to real-world programs?
\end{enumerate}

\subsection{The Overhead of \framework}
\label{sec:eva-dual}

\framework provides three dual-safe configurations:
\begin{itemize}
    \item \frameworkbaggy applies dual Baggy to all potentially unsafe instructions.

    \item \frameworkdp applies dual Delta Pointers to all potentially unsafe instructions.

    \item \frameworksfislh is the selective composition of Safe Stack and SLH (aka. \sfiselslh in \Cref{sec:seldefences}).
\end{itemize}

\changes{We did not include CHERI~\cite{woodruff2014cheri} and SpecASan~\cite{ganjisaffar2025specasan} in our evaluation because both require specialized hardware support, whereas \dualbaggy and \dualdp are software-only defenses. Specifically, CHERI depends on dedicated architectural extensions, while SpecASan relies on the ARM Memory Tagging Extension (MTE) and does not have a publicly available implementation.}

For the other two selective composition variants \baggyselslh and \dpselslh introduced in \Cref{sec:seldefences}, we study their performance overhead under Q4 (\Cref{sec:eval-selective}).

To address Q1 and Q2, we assess the performance overhead of the three configurations of \framework and compare them against their corresponding naive compositions, which sequentially apply three mechanisms to all potentially unsafe program locations \changes{identified by our static analysis}: \baselinebaggy, \baselinedp, and \baselinesfi. It is important to note that \baselinebaggy and \baselinedp do not satisfy dual safety (\Cref{tab:property}); nevertheless, we include them to emphasize that, despite their lack of safety guarantees, these naive compositions still incur substantial performance overhead.

We measure the overhead of each configuration against the original program execution time without any protection. 

\begin{figure*}[t]
    \centering
    \includegraphics[width=0.95\linewidth]{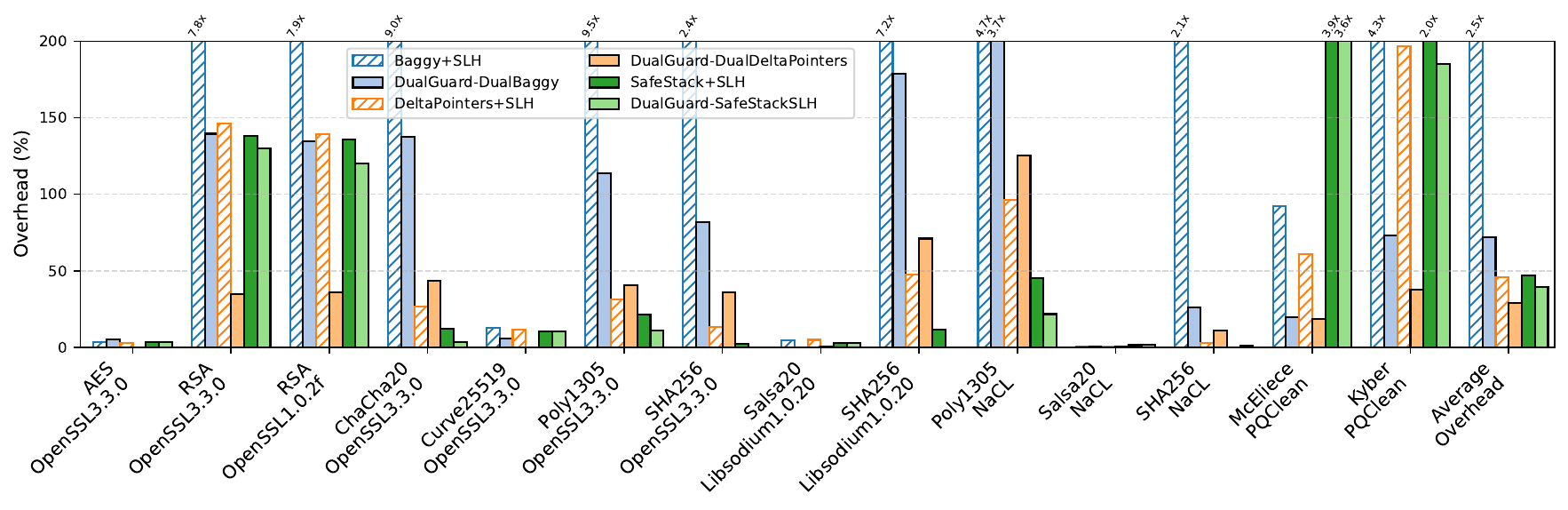}
    \vspace{-2ex}
    \caption{Cryptographic library overhead (Bars with the same color but different shades are directly comparable; bars with different colors represent different defense schemes and are not comparable with each other. Solid bars represent dual safe defenses, while bars with stripes are not dual safe.)}
    \label{fig:crypto-overhead}
\end{figure*}

\subsubsection{SPEC CPU 2017} We summarize the results in \Cref{fig:spec-overhead}, where the striped bars represent ones that are not dual safe.

\textbf{Baggy-based protection.} 
The overhead of \baselinebaggy ranges from 7\% to 358.5\%. In contrast, \frameworkbaggy consistently incurs lower overhead. The reduction in overhead ranges from 3\% for lbm\_r to 95.3\% for x264\_r. On average,\footnote{All averages are reported as geometric means.} \baselinebaggy incurs an overhead of 148.4\%, whereas \frameworkbaggy reduces it to 43.6\%. Overall, \framework achieves an overhead reduction of 70.6\% over the ``mitigation-upon-mitigation'' option based on Baggy.

\textbf{Delta Pointers-based protection.} The overhead of \baselinedp ranges from 11.6\% to 241.9\%. In contrast, \frameworkdp consistently incurs lower overhead except for 2 benchmarks where it incurs a slightly higher overhead. On average, \baselinedp incurs an overhead of 119.7\%, whereas \frameworkdp reduces it to 33.6\%. Overall, \framework achieves an overhead reduction of 71.9\% over \baselinedp.

\textbf{{\safestack}-based protection.} 
The overhead of \baselinesfi ranges from 0\% to 240.1\%. \frameworksfislh incurs lower or comparable overhead than \baselinesfi in all benchmarks.
Overall, \baselinesfi incurs an overhead of 103.2\% whereas \frameworksfislh reduces it to 99.5\%, though we observe that the reduction is smaller than what we observe in the Baggy-based and Delta-Pointers-based cases.

Overall, \framework’s dual-safe configurations, particularly the novel mechanisms \dualbaggy and \dualdp, consistently outperform their corresponding ``mitigation-upon-mitigation'' counterparts on the SPEC CPU 2017 benchmark. Among them, \dualdp incurs the lowest overhead overall.

\subsubsection{Cryptographic libraries} Similar to the results on SPEC CPU 2017 benchmark, \framework’s dual-safe configurations consistently outperform their corresponding ``mitigation-upon-mitigation'' counterparts on the cryptographic libraries as well. The results are summarized in \Cref{fig:crypto-overhead}.

Across Baggy-, Delta Pointers-, and SafeStack-based protections, the overhead of these naive combinations ranges from 0\% to 945.1\%. In contrast, \framework’s configurations consistently incur lower overhead. The reduction in overhead is up to 97.6\% for Baggy, 95.4\% for Delta Pointers and 73.5\% for SafeStack. On average, \frameworkbaggy, \frameworkdp, and \frameworksfislh incur overheads of 71.9\%, 28.9\%, and 39.2\%, respectively. Overall, we observe reduction rates ranging from 16.8\% to 71.8\% across all configurations. Similar to the SPEC CPU 2017 benchmark, \dualdp achieves the lowest overall overhead of 28.9\%.

To better understand how \framework’s configurations achieve lower overhead than naive compositions, we also present an in-depth case study of a real-world example from the big number (bn) library of OpenSSL 3.3.0 in the Appendix. \changes{Overall, we observe that the overhead reduction mostly comes from: 1) replacing branch-based checks with more efficient branchless checks, and 2) eliminating redundant speculative instrumentation (e.g., SLH) enabled by our dual-safe defenses. We further observe that varying the reorder buffer size configuration has minimal impact on performance.}

\subsection{Dual vs. Partial Memory Safety}

To address Q3, we compare \frameworkbaggy and \frameworkdp with the SLH implementation in LLVM (LLVM SLH), \lightslh~\cite{zhu2025lightslh}, Baggy, and Delta Pointers, as all of these mechanisms enforce memory safety at the granularity of objects. 
Among these approaches, only the \framework variants offer dual memory safety guarantees. In contrast, Baggy and Delta Pointers fail to provide speculative memory safety, while LLVM SLH and \lightslh are limited to mitigating Spectre V1 attacks. 
\Cref{tab:partial} presents a summary of the overall performance overhead incurred by these mechanisms on the \changes{SPEC and} cryptographic library benchmarks; a detailed breakdown is included in the Appendix.

\begin{table}[t]
\caption{Compare \framework with partial protections \\ (*: LightSLH cannot run on SPEC benchmarks due to insufficient instruction support and scalability issues.)}
\label{tab:partial}
\centering
\begin{tabular}{l|c|r|r} 
\toprule
\multirow{2}{*}{Defense}    & \multirow{2}{*}{Safety} & \multicolumn{2}{c}{Overhead}  \\ 
\cline{3-4}
                            &                         & SPEC &   Crypto              \\ 
\midrule
\midrule
LLVM-SLH                    & $\mathghost$   &  116\%  &  46\%               \\
LighSLH                     & $\mathghost$   &  N/A*  &  31\%                 \\
Baggy                       & $\seq$         &  131\% &    147\%             \\
Delta Pointers              & $\seq$         &  24\%  &  20\%                \\
\dualbaggy                  & Dual           &  44\%   &  72\%               \\
\dualdp                     & Dual           &  34\%  &     29\%             \\
\bottomrule
\end{tabular}
\end{table}
We observe that the \framework defenses deliver dual memory safety guarantees while incurring only modest overhead relative to Delta Pointers and, in some cases, even lower overhead than Baggy, LLVM-SLH, and LightSLH, all while maintaining or strengthening memory safety assurances across both sequential and speculative execution contexts.


\subsection{Selective Defense}
\label{sec:eval-selective}

\begin{figure}[h]
    \centering
    \includegraphics[width=0.7\linewidth]{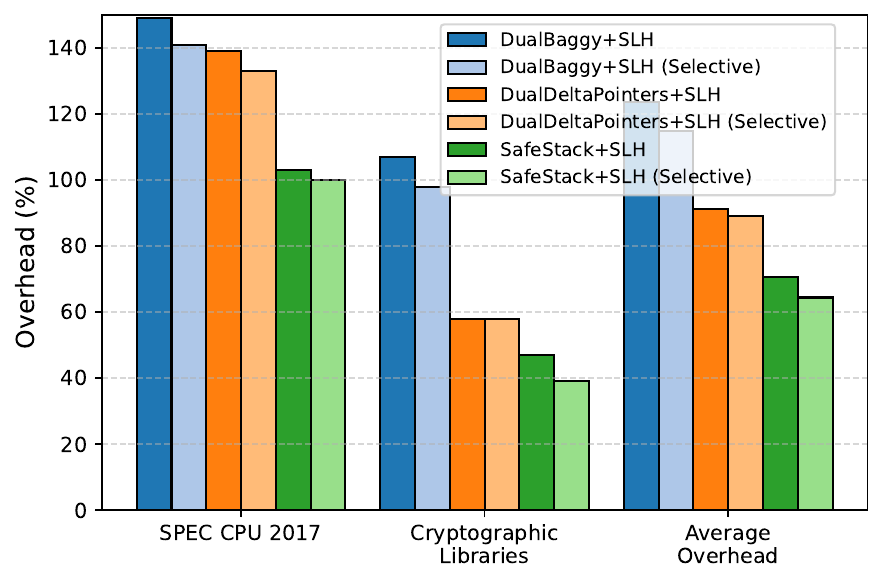}
    \vspace{-2ex}
    \caption{The effect of selective protection}
    \label{fig:selective}
\end{figure}

To address Q4, we compare the three selective dual memory safety mechanisms described in \Cref{sec:seldefences} against their counterparts that do not leverage~\Cref{thm:selective}, i.e., approaches that conservatively apply protections in sequence to all program locations classified as U$_{\seq}$S$_{\mathghost}$, U$_{\seq}$U$_{\mathghost}$ or U$_{\seq}$S$_{\mathghost}$. The results are summarized in~\Cref{fig:selective}, with a detailed per-benchmark breakdown provided in the Appendix.

We observe that, across Baggy-, Delta Pointers-, and SafeStack-based protections, selective composition reduces performance overhead by 5.4\%, 4.3\%, and 2.9\%, respectively, on the SPEC CPU 2017 benchmark, and by 8.4\%, 0\%, and 17\%, respectively, on cryptographic library benchmark. This reduction underscores the practical value of \Cref{thm:selective}.

\subsection{Scalability}

The cryptographic library benchmarks are relatively small, with most programs comprising fewer than 1 KLoC; \framework’s static analysis completes for all of them within 8 seconds. For the SPEC CPU benchmarks, the largest program is xalancbmk\_r (520 KLoC), which requires 51.23 seconds for analysis, while perlbench\_r requires the longest analysis time of 8254.01 seconds. These results indicate that \framework can analyze real-world codebases of moderate size within a reasonable timeframe. A detailed breakdown of analysis times for each benchmark is provided in the Appendix.


\section{Limitations and Future Work}
\label{sec:future_safety}

Within the scope of spatial memory safety under Spectre~v1 attacks, our work provides the first systematic investigation into the interaction between memory safety and speculative execution, alongside proposing concrete spatial dual memory safe defense mechanisms. Beyond that, we believe that the formal insights and methodology developed in this paper establish a foundation for exploring broader threat models. For example, our safety preservation abstraction is formulated over general program transformations rather than being inherently bound to spatial bounds checking or Spectre~v1 mitigations. Consequently, it can be extended to evaluate whether other defenses (e.g., retpolines~\cite{kadir2019retpoline}, IBRS~\cite{intel2018ibrs}, or reference counting~\cite{reinking2021perceus}) preserve sequential and speculative safety guarantees. 

Moreover, there are at least two future directions for extending \framework:

\begin{itemize}
    \item \textbf{Temporal Memory Safety:} Temporal defenses (e.g., reference counting~\cite{reinking2021perceus}, and garbage collection~\cite{wilson1992uniprocessor}) maintain object lifetime metadata rather than spatial bounds. Under speculative execution, stale or mispredicted metadata could lead to transient use-after-free conditions or premature object reclamation. 
    
    \item \textbf{Other speculative attacks:} Other Spectre variants, such as Specter~v2 (indirect branch target poisoning)~\cite{barberis2022branch} and Specter-RSB (return stack buffer manipulation)~\cite{koruyeh2018spectre}, hijack transient control flow to arbitrary targets. Such indirect mispredictions could speculatively bypass memory-safety checks entirely or jump past instrumentation boundaries. 
\end{itemize}

These attack surfaces introduce failure modes that are qualitatively distinct from the bounds-check bypass scenarios addressed in this paper. Adapting our framework and verification models to account for dynamic control-flow redirection and temporal lifecycle invariants presents a promising direction for future research.

\section{Conclusion}

This paper introduces \framework, a unified framework to systematically address the fragmentation between non-speculative and speculative memory safety. By formalizing memory access behaviors into four distinct categories, \framework can identify the minimal requirements for achieving dual protection. Our evaluation confirms that this unified approach successfully removes redundant protections that often lead to unnecessary performance penalties. \framework achieves dual memory safety with a modest overhead, offering a robust and efficient solution for modern systems vulnerable to both sequential and speculative memory exploitations. Future work will continue to explore the scalability of the framework and generalize it to other Spectre attacks.

\section*{Ethics considerations}
We have meticulously considered the ethical implications of developing \framework, a framework designed to systematically provide memory safety across sequential and speculative execution. Our goal is to mitigate existing vulnerabilities. We do not reveal any new attack surface or specific attack sequences. To ensure our work was conducted responsibly, we adhered to a rigorous ethical framework throughout the research process. We identified three primary stakeholder groups potentially impacted by our work: End-Users of the affected software systems, Software Vendors and Maintainers, and Malicious Actors. 

\textbf{End-Users.} The individuals and organizations relying on safe software systems (e.g., cryptographic libraries) free from spatial memory errors are the ultimate beneficiaries of this research, but also the group most vulnerable to potential exploitation. To ensure no negative outcomes for end-users during the research phase, all experimental validations were conducted strictly within a private, isolated laboratory environment. Consequently, no legitimate production systems were exploited or compromised during our evaluation.


\textbf{Software Vendors and Maintainers}. To mitigate the risk of these vulnerabilities being exploited before defenses are in place, we adhered to standard responsible disclosure practices. As a defensive framework, we aim to ensure security by identifying and preventing potential vulnerabilities. Specifically, we identify unreported half-gadgets as potentially exploitable. However, these half-gadgets do not consist of complete exploits. Rather, they pointed out places where proactive defenses can be useful.



\textbf{Malicious Actors (Attackers).} A potential risk is that malicious actors might attempt to exploit program locations analyzed during our evaluation. We mitigate this risk in multiple ways:
\begin{itemize}
    \item Defensive Focus: Our primary objective is to develop secure and efficient defenses rather than uncover new attack surfaces. In particular, our case studies and analyses focus strictly on demonstrating how \framework{} eliminates redundant mitigations and reduces performance overhead, without presenting end-to-end exploits or actionable attack primitives.
    \item Guidance for Defenders: Rather than facilitating exploitation, we provide concrete principles and mechanisms that assist defenders in evaluating their systems and deploying targeted, dual-safe protections against realistic threats.
\end{itemize}


\section*{Acknowledgment}
This research is supported by the National Science Foundation under awards No. 2348304, No. 2401496 and No. 2401182, the Office of Naval Research under Award No. N000142512413, as well as the Air Force Research Laboratory (AFRL) under Contract No. FA8750-25-C-B041.

\bibliographystyle{IEEEtran}
\bibliography{ref}

@article{Oleksenko:2018,
  author    = {Oleksii Oleksenko and
               Bohdan Trach and
               Tobias Reiher and
               Mark Silberstein and
               Christof Fetzer},
  title     = {You Shall Not Bypass: Employing data dependencies to prevent Bounds
               Check Bypass},
  journal   = {CoRR},
  volume    = {abs/1805.08506},
  year      = {2018},
  url       = {http://arxiv.org/abs/1805.08506},
  archivePrefix = {arXiv},
  eprint    = {1805.08506},
  bibsource = {dblp computer science bibliography, https://dblp.org}
}

@misc{slh,
title={Speculative Load Hardening}, 
url={https://llvm.org/docs/SpeculativeLoadHardening.html}, 
journal={LLVM Documentation}, 
publisher={LLVM}, 
author={Chandler Carruth}
}

@inproceedings{cowan2000buffer,
  title={Buffer overflows: Attacks and defenses for the vulnerability of the decade},
  author={Cowan, Crispin and Wagle, F and Pu, Calton and Beattie, Steve and Walpole, Jonathan},
  booktitle={Proceedings DARPA Information Survivability Conference and Exposition. DISCEX'00},
  volume={2},
  pages={119--129},
  year={2000},
  organization={IEEE}
}

@incollection{kuznetzov2018safestack,
  title={Code-pointer integrity},
  author={Kuznetzov, Volodymyr and Szekeres, L{\'a}szl{\'o} and Payer, Mathias and Candea, George and Sekar, R and Song, Dawn},
  booktitle={The Continuing Arms Race: Code-Reuse Attacks and Defenses},
  pages={81--116},
  year={2018}
}

@inproceedings{akritidis2009baggy,
  title={Baggy Bounds Checking: An Efficient and Backwards-Compatible Defense against Out-of-Bounds Errors.},
  author={Akritidis, Periklis and Costa, Manuel and Castro, Miguel and Hand, Steven},
  booktitle={USENIX Security Symposium},
  volume={10},
  pages={96},
  year={2009}
}

@inproceedings{jones1997backwards,
  title={Backwards-Compatible Bounds Checking for Arrays and Pointers in C Programs.},
  author={Jones, Richard WM and Kelly, Paul HJ},
  booktitle={AADEBUG},
  volume={97},
  pages={13--26},
  year={1997}
}

@inproceedings{dhurjati2006backwards,
  title={Backwards-compatible array bounds checking for C with very low overhead},
  author={Dhurjati, Dinakar and Adve, Vikram},
  booktitle={Proceedings of the 28th international conference on Software engineering},
  pages={162--171},
  year={2006}
}

@article{saileshwar2022heapcheck,
  title={Heapcheck: Low-cost hardware support for memory safety},
  author={Saileshwar, Gururaj and Boivie, Rick and Chen, Tong and Segal, Benjamin and Buyuktosunoglu, Alper},
  journal={ACM Transactions on Architecture and Code Optimization (TACO)},
  volume={19},
  number={1},
  pages={1--24},
  year={2022},
  publisher={ACM New York, NY}
}

@article{nagarakatte2012watchdog,
  title={Watchdog: Hardware for safe and secure manual memory management and full memory safety},
  author={Nagarakatte, Santosh and Martin, Milo MK and Zdancewic, Steve},
  journal={ACM SIGARCH Computer Architecture News},
  volume={40},
  number={3},
  pages={189--200},
  year={2012},
  publisher={ACM New York, NY, USA}
}

@article{nagarakatte2013hardware,
  title={Hardware-enforced comprehensive memory safety},
  author={Nagarakatte, Santosh and Martin, Milo MK and Zdancewic, Steve},
  journal={IEEE Micro},
  volume={33},
  number={3},
  pages={38--47},
  year={2013},
  publisher={IEEE}
}

@article{necula2005ccured,
  title={CCured: Type-safe retrofitting of legacy software},
  author={Necula, George C and Condit, Jeremy and Harren, Matthew and McPeak, Scott and Weimer, Westley},
  journal={ACM Transactions on Programming Languages and Systems (TOPLAS)},
  volume={27},
  number={3},
  pages={477--526},
  year={2005},
  publisher={ACM New York, NY, USA}
}

@inproceedings{jim2002cyclone,
  title={Cyclone: a safe dialect of C.},
  author={Jim, Trevor and Morrisett, J Gregory and Grossman, Dan and Hicks, Michael W and Cheney, James and Wang, Yanling},
  booktitle={USENIX Annual Technical Conference, General Track},
  pages={275--288},
  year={2002}
}

@inproceedings{durumeric2014matter,
  title={The matter of heartbleed},
  author={Durumeric, Zakir and Li, Frank and Kasten, James and Amann, Johanna and Beekman, Jethro and Payer, Mathias and Weaver, Nicolas and Adrian, David and Paxson, Vern and Bailey, Michael and others},
  booktitle={Proceedings of the 2014 conference on internet measurement conference},
  pages={475--488},
  year={2014}
}

@article{orman2003morris,
  title={The Morris worm: A fifteen-year perspective},
  author={Orman, Hilarie},
  journal={IEEE Security \& Privacy},
  volume={1},
  number={5},
  pages={35--43},
  year={2003},
  publisher={IEEE}
}

@article{simpson2013memsafe,
  title={MemSafe: ensuring the spatial and temporal memory safety of C at runtime},
  author={Simpson, Matthew S and Barua, Rajeev K},
  journal={Software: Practice and Experience},
  volume={43},
  number={1},
  pages={93--128},
  year={2013},
  publisher={Wiley Online Library}
}

@article{huang2022taming,
  title={The taming of the stack: Isolating stack data from memory errors},
  author={Huang, Kaiming and Huang, Yongzhe and Payer, Mathias and Qian, Zhiyun and Sampson, Jack and Tan, Gang and Jaeger, Trent},
  year={2022},
  publisher={Internet Society}
}

@inproceedings{huang2024top,
  title={Top of the heap: Efficient memory error protection of safe heap objects},
  author={Huang, Kaiming and Payer, Mathias and Qian, Zhiyun and Sampson, Jack and Tan, Gang and Jaeger, Trent},
  booktitle={Proceedings of the 2024 on ACM SIGSAC Conference on Computer and Communications Security},
  pages={1330--1344},
  year={2024}
}

@article{huang2024comprehensive,
  title={Comprehensive memory safety validation: An alternative approach to memory safety},
  author={Huang, Kaiming and Payer, Mathias and Qian, Zhiyun and Sampson, Jack and Tan, Gang and Jaeger, Trent},
  journal={IEEE Security \& Privacy},
  volume={22},
  number={4},
  pages={40--49},
  year={2024},
  publisher={IEEE}
}

@inproceedings{Kocher2018spectre,
 author = {Paul Kocher and Jann Horn and Anders Fogh and and Daniel Genkin and Daniel Gruss and Werner Haas and Mike Hamburg and Moritz Lipp and Stefan Mangard and Thomas Prescher and Michael Schwarz and Yuval Yarom},
 title = {Spectre Attacks: Exploiting Speculative Execution},
 booktitle = {40th IEEE Symposium on Security and Privacy (S\&P'19)},
 year = {2019},
}

@inproceedings{Lipp2018meltdown,
 author = {Moritz Lipp and Michael Schwarz and Daniel Gruss and Thomas Prescher and Werner Haas and Anders Fogh and Jann Horn and Stefan Mangard and Paul Kocher and Daniel Genkin and Yuval Yarom and Mike Hamburg},
 title = {Meltdown: Reading Kernel Memory from User Space},
 booktitle = {27th {USENIX} Security Symposium ({USENIX} Security 18)},
 year = {2018},
}

@misc{intel2018lfence,
  title={Speculative Execution Side Channel Mitigations},
  author={{Intel Corporation}},
  url={https://www.intel.com/content/dam/develop/external/us/en/documents/336996-speculative-execution-side-channel-mitigations.pdf},
  year={2018}
}

@inproceedings{cauligi2022sok,
  title={SoK: Practical foundations for software Spectre defenses},
  author={Cauligi, Sunjay and Disselkoen, Craig and Moghimi, Daniel and Barthe, Gilles and Stefan, Deian},
  booktitle={2022 IEEE Symposium on Security and Privacy (SP)},
  pages={666--680},
  year={2022},
  organization={IEEE}
}

@inproceedings{canella2019systematic,
  title={A systematic evaluation of transient execution attacks and defenses},
  author={Canella, Claudio and Van Bulck, Jo and Schwarz, Michael and Lipp, Moritz and Von Berg, Benjamin and Ortner, Philipp and Piessens, Frank and Evtyushkin, Dmitry and Gruss, Daniel},
  booktitle={28th USENIX Security Symposium (USENIX Security 19)},
  pages={249--266},
  year={2019}
}

@inproceedings{zhu2025lightslh,
  title={Place protections at the right place: targeted hardening for cryptographic code against spectre v1},
  author={Zhu, Yiming and Huang, Wenchao and Xiong, Yan},
  booktitle={34th USENIX Security Symposium (USENIX Security 25)},
  pages={2557--2576},
  year={2025}
}

@inproceedings{baumann2025fslh,
  title={FSLH: Flexible Mechanized Speculative Load Hardening},
  author={Baumann, Jonathan and Blanco, Roberto and Ducruet, L{\'e}on and Harwig, Sebastian and Hri{\c{t}}cu, C{\u{a}}t{\u{a}}lin},
  booktitle={2025 IEEE 38th Computer Security Foundations Symposium (CSF)},
  pages={569--584},
  year={2025},
  organization={IEEE}
}

@inproceedings{shivakumar2023selslh,
  title={Typing high-speed cryptography against spectre v1},
  author={Shivakumar, Basavesh Ammanaghatta and Barthe, Gilles and Gr{\'e}goire, Benjamin and Laporte, Vincent and Oliveira, Tiago and Priya, Swarn and Schwabe, Peter and Tabary-Maujean, Lucas},
  booktitle={2023 IEEE Symposium on Security and Privacy (SP)},
  pages={1094--1111},
  year={2023},
  organization={IEEE}
}

@inproceedings{di2025half,
  title={Half Spectre, Full Exploit: Hardening Rowhammer Attacks with Half-Spectre Gadgets},
  author={Di Dio, Andrea and Hertogh, Math{\'e} and Giuffrida, Cristiano},
  booktitle={2025 IEEE Symposium on Security and Privacy (SP)},
  pages={3583--3598},
  year={2025},
  organization={IEEE}
}

@inproceedings{christou2024eclipse,
  title={Eclipse: Preventing Speculative Memory-error Abuse with Artificial Data Dependencies},
  author={Christou, Neophytos and Gaidis, Alexander J and Atlidakis, Vaggelis and Kemerlis, Vasileios P},
  booktitle={Proceedings of the 2024 on ACM SIGSAC Conference on Computer and Communications Security},
  pages={3913--3927},
  year={2024}
}

@inproceedings{mambretti2021bypassing,
  title={Bypassing memory safety mechanisms through speculative control flow hijacks},
  author={Mambretti, Andrea and Sandulescu, Alexandra and Sorniotti, Alessandro and Robertson, William and Kirda, Engin and Kurmus, Anil},
  booktitle={2021 IEEE European Symposium on Security and Privacy (EuroS\&P)},
  pages={633--649},
  year={2021},
  organization={IEEE}
}

@inproceedings{bhattacharyya2020specrop,
  title={$\{$SpecROP$\}$: Speculative exploitation of $\{$ROP$\}$ chains},
  author={Bhattacharyya, Atri and S{\'a}nchez, Andr{\'e}s and Koruyeh, Esmaeil M and Abu-Ghazaleh, Nael and Song, Chengyu and Payer, Mathias},
  booktitle={23rd International Symposium on Research in Attacks, Intrusions and Defenses (RAID 2020)},
  pages={1--16},
  year={2020}
}

@inproceedings{ravichandran2022pacman,
  title={PACMAN: attacking ARM pointer authentication with speculative execution},
  author={Ravichandran, Joseph and Na, Weon Taek and Lang, Jay and Yan, Mengjia},
  booktitle={Proceedings of the 49th Annual International Symposium on Computer Architecture},
  pages={685--698},
  year={2022}
}

@inproceedings{goktas2020speculative,
  title={Speculative probing: Hacking blind in the Spectre era},
  author={G{\"o}ktas, Enes and Razavi, Kaveh and Portokalidis, Georgios and Bos, Herbert and Giuffrida, Cristiano},
  booktitle={Proceedings of the 2020 ACM SIGSAC Conference on Computer and Communications Security},
  pages={1871--1885},
  year={2020}
}

@misc{intel2018ibrs,
  title={Indirect Branch Restricted Speculation},
  author={{Intel Corporation}},
  url={https://www.intel.com/content/www/us/en/developer/articles/technical/software-security-guidance/technical-documentation/indirect-branch-restricted-speculation.html},
  year={2018}
}

@misc{arm2025csdb,
  title={Consumption of Speculative Data Barrier},
  author={ARM},
  url={https://developer.arm.com/documentation/ddi0602/2022-09/Base-Instructions/CSDB--Consumption-of-Speculative-Data-Barrier-},
  year={2025}
}

@article{kim2014flipping,
  title={Flipping bits in memory without accessing them: An experimental study of DRAM disturbance errors},
  author={Kim, Yoongu and Daly, Ross and Kim, Jeremie and Fallin, Chris and Lee, Ji Hye and Lee, Donghyuk and Wilkerson, Chris and Lai, Konrad and Mutlu, Onur},
  journal={ACM SIGARCH Computer Architecture News},
  volume={42},
  number={3},
  pages={361--372},
  year={2014},
  publisher={ACM New York, NY, USA}
}

@inproceedings{zhang2023ultimate,
  title={Ultimate $\{$SLH$\}$: Taking speculative load hardening to the next level},
  author={Zhang, Zhiyuan and Barthe, Gilles and Chuengsatiansup, Chitchanok and Schwabe, Peter and Yarom, Yuval},
  booktitle={32nd USENIX Security Symposium (USENIX Security 23)},
  pages={7125--7142},
  year={2023}
}

@inproceedings{guarnieri2020spectector,
  title={Spectector: Principled detection of speculative information flows},
  author={Guarnieri, Marco and K{\"o}pf, Boris and Morales, Jos{\'e} F and Reineke, Jan and S{\'a}nchez, Andr{\'e}s},
  booktitle={2020 IEEE Symposium on Security and Privacy (SP)},
  pages={1--19},
  year={2020},
  organization={IEEE}
}

@inproceedings{gorter2024sticky,
  title={Sticky tags: Efficient and deterministic spatial memory error mitigation using persistent memory tags},
  author={Gorter, Floris and Kroes, Taddeus and Bos, Herbert and Giuffrida, Cristiano},
  booktitle={2024 IEEE Symposium on Security and Privacy (SP)},
  pages={4239--4257},
  year={2024},
  organization={IEEE}
}

@article{moghadam2024memory,
  title={Memory integrity techniques for memory-unsafe languages: A survey},
  author={Moghadam, Vahid Eftekhari and Serra, Gabriele and Aromolo, Federico and Buttazzo, Giorgio and Prinetto, Paolo},
  journal={IEEE Access},
  volume={12},
  pages={43201--43221},
  year={2024},
  publisher={IEEE}
}

@inproceedings{barberis2022branch,
  title={Branch history injection: On the effectiveness of hardware mitigations against $\{$Cross-Privilege$\}$ spectre-v2 attacks},
  author={Barberis, Enrico and Frigo, Pietro and Muench, Marius and Bos, Herbert and Giuffrida, Cristiano},
  booktitle={31st USENIX Security Symposium (USENIX Security 22)},
  pages={971--988},
  year={2022}
}

@inproceedings{koruyeh2018spectre,
  title={Spectre returns! speculation attacks using the return stack buffer},
  author={Koruyeh, Esmaeil Mohammadian and Khasawneh, Khaled N and Song, Chengyu and Abu-Ghazaleh, Nael},
  booktitle={12th USENIX Workshop on Offensive Technologies (WOOT 18)},
  year={2018}
}

@misc{golang,
  title={The Go Project},
  author={{Google}},
  url={https://go.dev/project},
  year={2025}
}

@article{brotzman2021specsafe,
  title={SpecSafe: detecting cache side channels in a speculative world},
  author={Brotzman, Robert and Zhang, Danfeng and Kandemir, Mahmut Taylan and Tan, Gang},
  journal={Proceedings of the ACM on Programming Languages},
  volume={5},
  number={OOPSLA},
  pages={1--28},
  year={2021},
  publisher={ACM New York, NY, USA}
}

@inproceedings{george2024optisan,
  title={$\{$OPTISAN$\}$: Using Multiple Spatial Error Defenses to Optimize Stack Memory Protection within a Budget},
  author={George, Rahul and Chen, Mingming and Huang, Kaiming and Qian, Zhiyun and La Porta, Thomas and Jaeger, Trent},
  booktitle={33rd USENIX Security Symposium (USENIX Security 24)},
  pages={7195--7212},
  year={2024}
}

@software{dualguard-anonymous-repo,
  author       = {Dang, Sixuan and
                  Zhou, Quan and
                  Huang, Kaiming and
                  Zhu, Jianan and
                  Gollamudi, Anitha and
                  Jaeger, Trent and
                  Zhang, Danfeng},
  title        = {DualGuard},
  month        = sep,
  year         = 2026,
  publisher    = {Zenodo},
  doi          = {10.5281/zenodo.22823571},
  url          = {https://doi.org/10.5281/zenodo.22823571},
  swhid        = {swh:1:dir:8eae3f763b8d8104947aad2d729bbf34dee8891b
                   ;origin=https://doi.org/10.5281/zenodo.22823570;vi
                   sit=swh:1:snp:e2d6eafe430e0c3287948f4a0dfe40edecb3
                   311a;anchor=swh:1:rel:848eb24eeb018baa68f4311e1889
                   50458cff5239;path=/
                  },
}

@misc{dockerhub-dualguard,
author = {Dang, Sixuan and
                  Zhou, Quan and
                  Huang, Kaiming and
                  Zhu, Jianan and
                  Gollamudi, Anitha and
                  Jaeger, Trent and
                  Zhang, Danfeng},
title = {DualGuard Docker Image},
year = {2026},
url = {https://hub.docker.com/r/dualguard/dualguard}
}

@INPROCEEDINGS {kim2025tiktag,
author = { Kim, Juhee and Park, Jinbum and Roh, Sihyeon and Chung, Jaeyoung and Lee, Youngjoo and Kim, Taesoo and Lee, Byoungyoung },
booktitle = { 2025 IEEE Symposium on Security and Privacy (SP) },
title = {{ Tiktag: Breaking ARM's Memory Tagging Extension with Speculative Execution }},
year = {2025},
volume = {},
ISSN = {},
pages = {4063-4081},
doi = {10.1109/SP61157.2025.00039},
url = {https://doi.ieeecomputersociety.org/10.1109/SP61157.2025.00039},
publisher = {IEEE Computer Society},
address = {Los Alamitos, CA, USA},
month =May}

@inproceedings{vintila2025evaluating,
  title={Evaluating the effectiveness of memory safety sanitizers},
  author={Vintila, Emanuel Q and Zieris, Philipp and Horsch, Julian},
  booktitle={2025 IEEE Symposium on Security and Privacy (SP)},
  pages={774--792},
  year={2025},
  organization={IEEE}
}

@inproceedings{kroes2018delta,
  title={Delta pointers: Buffer overflow checks without the checks},
  author={Kroes, Taddeus and Koning, Koen and Van Der Kouwe, Erik and Bos, Herbert and Giuffrida, Cristiano},
  booktitle={Proceedings of the Thirteenth EuroSys Conference},
  pages={1--14},
  year={2018}
}

@misc{noncanonical,
  title={CVE-2020-12965},
  author={{NIST}},
  url={https://nvd.nist.gov/vuln/detail/CVE-2020-12965},
  year={2022}
}

@misc{amdCVE,
  title={Transient Execution of Non-canonical Accesses},
  author={{AMD}},
  url={https://www.amd.com/en/resources/product-security/bulletin/amd-sb-1010.html},
  year={2021}
}

@inproceedings{hertogh2024leaky,
  title={Leaky address masking: Exploiting unmasked spectre gadgets with noncanonical address translation},
  author={Hertogh, Math{\'e} and Wiebing, Sander and Giuffrida, Cristiano},
  booktitle={2024 IEEE Symposium on Security and Privacy (SP)},
  pages={3773--3788},
  year={2024},
  organization={IEEE}
}

@article{musaev2021transient,
  title={Transient execution of non-canonical accesses},
  author={Musaev, Saidgani and Fetzer, Christof},
  journal={arXiv preprint},
  year={2021},
  url={https://arxiv.org/abs/2108.10771}
}

@misc{deltapointerrepo,
  url={https://github.com/vusec/deltapointers},
  year={2018}
}

@inproceedings{liljestrand2019pac,
  title={$\{$PAC$\}$ it up: Towards pointer integrity using $\{$ARM$\}$ pointer authentication},
  author={Liljestrand, Hans and Nyman, Thomas and Wang, Kui and Perez, Carlos Chinea and Ekberg, Jan-Erik and Asokan, N},
  booktitle={28th USENIX Security Symposium (USENIX Security 19)},
  pages={177--194},
  year={2019}
}

@article{partap2022memory,
  title={Memory tagging: A memory efficient design},
  author={Partap, Aditi and Boneh, Dan},
  journal={arXiv preprint arXiv:2209.00307},
  year={2022}
}

@INPROCEEDINGS {quicksafe,
author = { Blaser, Johannes and Gorter, Floris and v. Gleissenthall, Klaus and Bos, Herbert },
booktitle = { 2026 IEEE Symposium on Security and Privacy (SP) },
title = {{ QuickSafe: Targeted Hardening Against Memory Corruption }},
year = {2026},
pages = {1580-1598},
}

@inproceedings{ganjisaffar2025specasan,
  title={Specasan: Mitigating transient execution attacks using speculative address sanitization},
  author={Ganjisaffar, Saber and Koruyeh, Esmaeil Mohmmadian and Zellmer, Jason and Esfeden, Hodjat Asghari and Song, Chengyu and Abu-Ghazaleh, Nael},
  booktitle={Proceedings of the 52nd Annual International Symposium on Computer Architecture},
  pages={2032--2045},
  year={2025}
}

@inproceedings{fuchs2024safe,
  title={Safe speculation for cheri},
  author={Fuchs, Franz A and Woodruff, Jonathan and Rugg, Peter and Joannou, Alexandre and Clarke, Jessica and Baldwin, John and Davis, Brooks and Neumann, Peter G and Watson, Robert NM and Moore, Simon W},
  booktitle={2024 IEEE 42nd International Conference on Computer Design (ICCD)},
  pages={364--372},
  year={2024},
  organization={IEEE}
}

@article{woodruff2014cheri,
  title={The CHERI capability model: Revisiting RISC in an age of risk},
  author={Woodruff, Jonathan and Watson, Robert NM and Chisnall, David and Moore, Simon W and Anderson, Jonathan and Davis, Brooks and Laurie, Ben and Neumann, Peter G and Norton, Robert and Roe, Michael},
  journal={ACM SIGARCH Computer Architecture News},
  volume={42},
  number={3},
  pages={457--468},
  year={2014},
  publisher={ACM New York, NY, USA}
}

@inproceedings{reinking2021perceus,
  title={Perceus: Garbage free reference counting with reuse},
  author={Reinking, Alex and Xie, Ningning and De Moura, Leonardo and Leijen, Daan},
  booktitle={Proceedings of the 42nd ACM SIGPLAN International Conference on Programming Language Design and Implementation},
  pages={96--111},
  year={2021}
}

@inproceedings{wilson1992uniprocessor,
  title={Uniprocessor garbage collection techniques},
  author={Wilson, Paul R},
  booktitle={International Workshop on Memory Management},
  pages={1--42},
  year={1992},
  organization={Springer}
}

@inproceedings{kadir2019retpoline,
  title={Retpoline technique for mitigating spectre attack},
  author={Kadir, Mohd Fadzil Abdul and Wong, Jin Kee and Ab Wahab, Fauziah and Bharun, Ahmad Faisal Amri Abidin and Mohamed, Mohamad Afendee and Zakaria, Aznida Hayati},
  booktitle={2019 6th International Conference on Electrical and Electronics Engineering (ICEEE)},
  pages={96--101},
  year={2019},
  organization={IEEE}
}

\appendix
\subsection{Formalizing Program Semantics with Speculation}
\label{sec:langsem}

\begin{figure*}[!htb]
\begin{mathpar}
\inferrule*[right=(Fence)]{\vec{\micro}'=zero(\vec{\micro}) \and \enabled{\vec{\micro}} }
{\configthree{\mem}{\vec{\micro}}{\fencecmd} \evalsto{} \configthree{\mem}{\vec{\micro}'}{\skipcmd}}
\and
\inferrule*[right=(Asgn)]{\mathit{eval}(e,\mem)=v \and \vec{\micro}'=decr(\vec{\micro}) \and \enabled{\vec{\micro}}}
{\configthree{\mem}{\vec{\micro}}{x := e} \evalsto{\load{\location}{e}, \store{\location}{x}} \configthree{\subst{x}{v}{\mem}}{\vec{\micro}'}{\skipcmd}}
\and
\inferrule*[right=(Skip)]{\vec{\micro}\neq \emptyset \and enabled(\vec{\micro}) }
{\configthree{\mem}{\vec{\micro}}{\skipcmd} \evalsto{} \configthree{\mem}{zero(\vec{\micro})}{\skipcmd}}
\and
\inferrule*[right=(Seq1)]{\configthree{\mem}{\vec{\micro}}{\cmd_1}\evalsto{e} \configthree{\mem'}{\vec{\micro}'}{\cmd_1'}}
{\configthree{\mem}{\vec{\micro}}{\cmd_1;\cmd_2}  \evalsto{e} \configthree{\mem'}{\vec{\micro}'}{\cmd_1';\cmd_2}}
\and
\inferrule*[right=(Seq2)]{ }
{\configthree{\mem}{\vec{\micro}}{\skipcmd;\cmd_2}  \evalsto{} \configthree{\mem}{\vec{\micro}}{\cmd_2}}
\and
\inferrule*[right=(If)]{
\inferrule{}{
p=\oracle(\location).pop() \\
}
\and
c_{next}=
\begin{cases}
c_1 \text{ when } p=\true \cr
c_2 \text{ when } p=\false
\end{cases}
\\
\vec{\micro}'=decr(\vec{\micro})\cdot \config{p}{\mem}{w}{\ifcmd{b}{c_1}{c_2}}
\and \enabled{\vec{\micro}}}
{\configthree{\mem}{\vec{\micro}}{\ifcmd{b}{c_1}{c_2}}  \evalsto{\start{i}\cdot \pc{c_{next}}} \configthree{\mem}{\vec{\micro}'}{c_{next}} 
}
\and
\inferrule*[right=(While)]{ }{\configthree{\mem}{\vec{\micro}}{\whilecmd{b}{c}} \evalsto{} \configthree{\mem}{\vec{\micro}}{\ifcmd{b}{(c;\whilecmd{b}{c})}{\skipcmd}}}
\and
\inferrule*[right=(Commit)]{
\mathit{eval}(b,\mem)=p \and \enabled{\vec{\micro}'}}
{\configthree{\mem}{\config{p}{\mem'}{0}{\ifcmd{b}{c_1}{c_2}}\cdot \vec{\micro}'}{c}  \evalsto{\commit{i'}} \configthree{\mem}{\vec{\micro}'}{c} 
}
\and
\inferrule*[right=(Rollback)]{
\mathit{eval}(b,\mem) = v \neq p \and \enabled{\vec{\micro}'}
\and 
c'=
\begin{cases}
c_1 \text{ when } v=\true \cr
c_2 \text{ when } v=\false
\end{cases}
}
{\configthree{\mem}{\config{p}{\mem'}{0}{\ifcmd{b}{c_1}{c_2}}\cdot \vec{\micro}'}{c}  \evalsto{\rollback{i'}} \configthree{\mem'}{\vec{\micro}'}{c'} 
}
\end{mathpar}
\caption{Speculative-Aware semantics given an oracle $\oracle$ and a transaction length $w$.}
\label{fig:specsemantics}
\end{figure*}

\textit{Language Semantics.} Recall that to model speculative execution, we
introduce two parameters: a \emph{prediction oracle}, $\oracle$, as well as the speculative transaction's length $w$ in the language semantics. Moreover, $\oracle$ is a partial function that takes the location of a branch command and returns a queue of predicted branch outcomes (i.e., a queue of truth values). A queue is used in the oracle so that the semantics can have different outcomes for the same branch at different execution points. 

We assume that each command in the source code has a unique location $\location$. To get the next predicted outcome of the branch command with location $\location$, we use $\oracle(\location).pop()$.

\textit{Program state.} We define program state as a tuple $\configthree{\mem}{\vec{\micro}}{\cmd}$ that consists of a memory state $\mem$, a sequence of (potentially nested) speculative states $\vec{\micro}$ as well as $\cmd$, the remaining program to be executed. A speculative state $\micro$ represents a \emph{speculative transaction}~\cite{guarnieri2020spectector} that models speculative execution behavior. A single-step evaluation 
rule has the form of 
\[\configthree{\mem}{\vec{\micro}}{\mem} \evalsto{\event} \configthree{\mem'}{\vec{\micro}'}{\mem'}\] 
where event $\event$ tracks a set of security-related \emph{events}, including $\start{i}$, $\rollback{i}$ and $\commit{i}$ that denote start,
commit and rollback of the $i$th speculative transaction, as well as $\load{\location}{a[n]}$ and $\store{\location}{a[n]}$ that indicate the load and store of memory address $a[n]$ issued at program location $\location$. Since the semantics mostly follows prior work~\cite{brotzman2021specsafe}, we omit 
the evaluation rules here and formalized them in the Appendix.

We define a configuration as a tuple $\configthree{\mem}{\vec{\micro}}{\cmd}$ that consists of a memory state $\mem$, a sequence of (potentially nested) speculative states $\vec{\micro}$ as well as $\cmd$, the remaining program to be executed. As standard, memory $\mem$ is formalized as a map from variables/array elements (addresses) to their values.
A speculative state $\micro=\configfour{p}{\mem}{r}{\cmd}$ represents a \emph{speculative transaction}~\cite{guarnieri2020spectector}, which consists of the predicted boolean outcome $p$, the memory $\mem$ prior to the start of the transaction, the remaining number of instructions to be executed in the transaction $r$, and the commands $\cmd$ to be resumed in case of a rollback. Moreover, we use the following helper functions to update and check the remaining number of instructions in all transactions:
\begin{itemize}
\item decr($\vec{\micro}$): decrease the remaining instructions by one for each $\micro\in \vec{\micro}$. 

\item zero($\vec{\micro}$): set the remaining instructions to zero for each $\micro\in \vec{\micro}$. 

\item enabled($\vec{\micro}$): returns true if $r\neq 0$ for any $\micro\in \vec{\micro}$ or when $\vec{\micro}=\emptyset$. 
\end{itemize}

Given an oracle $\oracle$ and a transaction length $w$, each evaluation rule (summarized in Figure~\ref{fig:specsemantics}) has the form of $\configthree{\mem}{\vec{\micro}}{\cmd} \evalsto{e} \configthree{\mem'}{\vec{\micro}'}{\cmd'}$ where event $e$ tracks the memory locations being accessed and program instructions being executed; we use $\pc{\location}$ to denote the latter. Note that the given $\oracle$ and $w$, the semantics is deterministic: rules COMMIT and ROLLBACK apply when one transaction ended, while the other rules require $\enabled{\vec{\micro}}$; rule COMMIT requires $\mathit{eval}(b,\mem) = p$ and ROLLBACK requires $\mathit{eval}(b,\mem) \neq p$.


Given a program $\cmd$, an oracle $\oracle$ and transaction length $w$, we use $\trace^\oracle_{\langle \mem, w \rangle}(\cmd)$  to denote the sequence of emitted events starting from the initial configuration $\configthree{\mem}{\emptyset}{\cmd}$ according to the semantics given $\oracle$ and $w$. For example, let $\cmd$ be the following code, where $\oracle(1)=\true$, $w$=1, $\mem(\codevar{i}) \ge \mem(\codevar{A\_size})$.

\begin{lstlisting}
if (i < A_size)
    y = A[i];
\end{lstlisting}

The evaluation steps is as follows according to the semantics:
\begin{align*}
  &\configthree{\mem}{\emptyset}{\cmd} \\
\evalsto{\start{1}\cdot \pc{2}} &\configthree{\mem}{\config{\true}{\mem}{1}{\cmd}}{\texttt{y=A[i]}} \\
\evalsto{\load{2}{A[i]}, \store{2}{y}} &\configthree{\mem'}{\config{\true}{\mem}{0}{\cmd}}{\skipcmd} \\
\evalsto{\rollback{1}} &\configthree{\mem}{\emptyset}{\skipcmd} \\
\end{align*}

Hence, $\trace^\oracle_{\langle \mem, w \rangle}(\cmd)$ generates an event trace $\start{1}$, $\pc{2}$, $\load{2}{A[i]}$, $\store{2}{y}$, $\rollback{1}$ for the concrete execution. 

We note that our interpretation of $\oracle$ and $w$ over-approximates the ones in~\cite{guarnieri2020spectector}, which assumes a partial function that takes as input a program, a branching history and a branch instruction, and that returns as output  the predicted branch outcome and transaction length. Our formulation  considers all possible outcomes of the semantics in~\cite{guarnieri2020spectector}.

\section{Value Range Analysis}

The following are the data flow equations we used in our value range analysis.

\begin{equation}
\label{equ:vr}
    \begin{aligned}
        In(B) &= \bigcup_{B' \in pred(B)} Out(B') \\
        Out(B) &= (In(B) - kill(B)) \cup gen(B)
    \end{aligned}
\end{equation}

We use a forward-may data flow analysis to compute the range of each variable in the program. Equation \Cref{equ:vr} shows the data flow equations used by the value range analysis. $In(B)$ is the value range information at the entry point of basic block B, which is computed by taking the union of range at the exit point of all its predecessors, i.e., $Out(B')$. Specifically, if a variable's range information can be found in every predecessor, the analysis takes the union of all the range. If a variable's range does not exist in at least one of the predecessors, meaning the range cannot be decided at this point, so the resulting range is infinite.

$Out(B)$ is the range at the exit point of basic block $B$. It is computed by deducting the kill set $kill(B)$ and then union the generation set $gen(B)$. The kill set $kill(B)$ and the generation set $gen(B)$ contains variables that are defined in basic block B. A variable in $kill(B)$ and $gen(B)$ might be in $In(B)$ already, which indicates this variable is updated in basic block B. By deducting $kill(B)$ and union $gen(B)$, the analysis is able to first remove the old range in $In(B)$ and then add the newly updated range.

The following is an example to illustrate the impact of path-sensitivity on the two phases.

\begin{lstlisting}
int func(int i) {
    int A[8];
    if (i < 8) {
        return A[i];
    }
    return -1;
}
\end{lstlisting}

With path-sensitivity, the path condition that has to be satisfied to reach line 4 is $i < 8$. Under sequential execution, the memory access at line 4 is memory safe, and it will be marked as safe by the path-sensitive analysis. Under speculative execution, however, an attacker can train the branch predictor to take true branch when $i \ge 8$, which makes the path condition no longer holds. We took the heuristic that the attacker ignoring the branch condition is equivalent to the analysis dropping path condition when constraining the value range. The path-insensitive analysis would reject the access at line 4.

\begin{figure*}[t]
    \centering
    \includegraphics[width=0.95\linewidth]{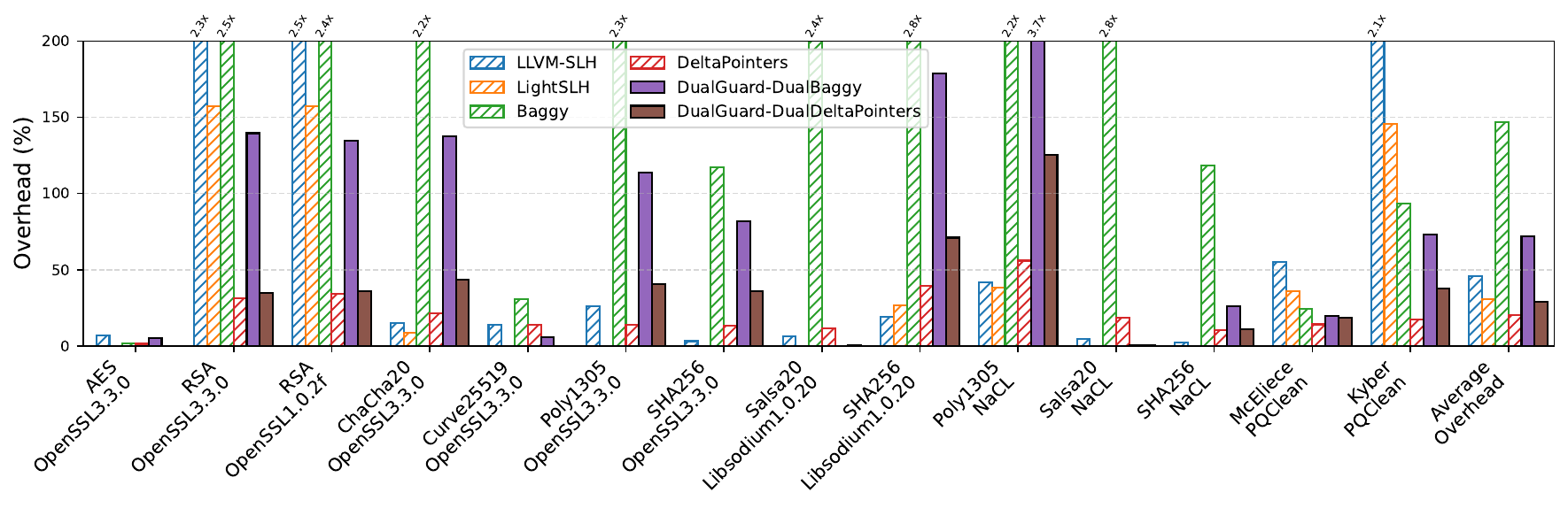}
    \caption{Comparison between partial defenses and \framework for cryptographic libraries}
    \label{fig:partial-overhead}
\end{figure*}

\begin{figure*}[t]
    \centering
    \includegraphics[width=0.95\linewidth]{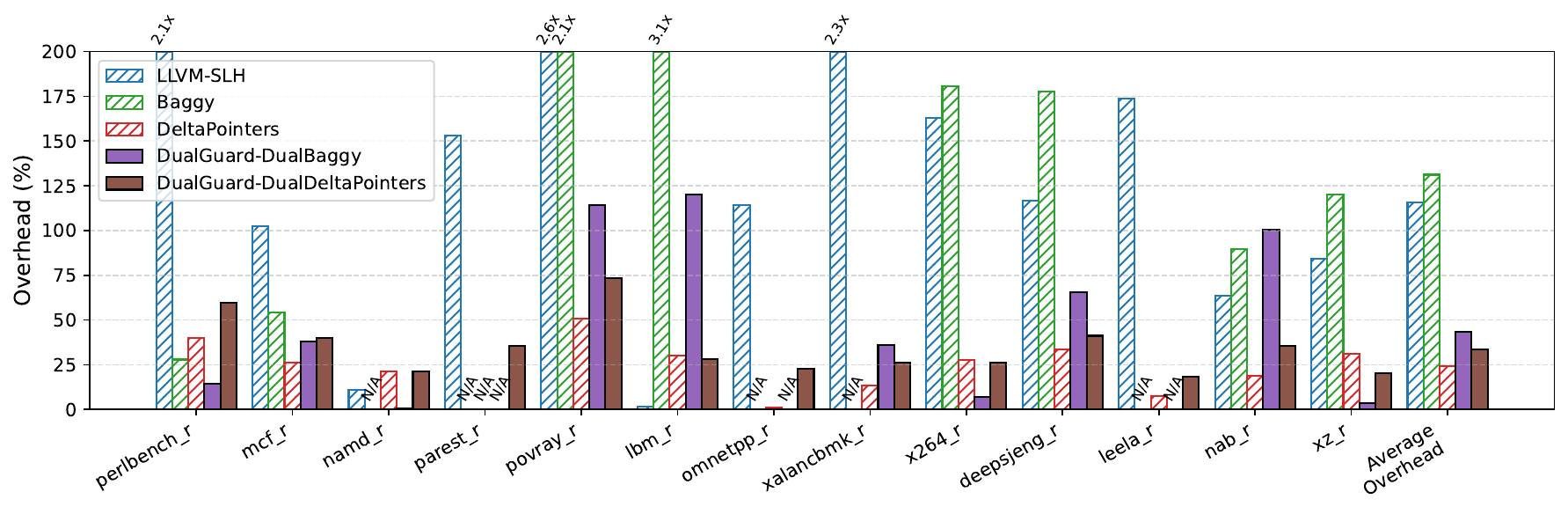}
    \caption{Comparison between partial defenses and \framework for SPEC benchmarks}
    \label{fig:partial-overhead-spec}
\end{figure*}

\subsection{Comparison with Partial Protections}

The detailed comparison with partial protections for each benchmark is shown in~\Cref{fig:partial-overhead} and~\Cref{fig:partial-overhead-spec} for cryptographic libraries and SPEC benchmark suite, respectively.

\textbf{Sequential defense.} \changes{Across the SPEC benchmarks, Baggy incurs overhead ranging from 27.9\% on perlbench\_r to 410\% on lbm\_r, with an average of 131\%. In comparison, \frameworkbaggy incurs an average overhead of only 44\%, corresponding to a 66\% reduction relative to Baggy. Delta Pointers achieves lower overhead than Baggy, ranging from 1.3\% on omnetpp\_r to 51\% on povray\_r. \frameworkdp demonstrates comparable or slightly higher overhead due to the extra instructions for dual-safety. Overall, \frameworkdp achieves overhead that is lower than or comparable to that of Delta Pointers, while providing a stronger security guarantee.}

Across the cryptographic library benchmarks, Baggy exhibits substantial runtime overhead, ranging from 1.8\% for AES in OpenSSL 3.3.0 to as high as 280\% for SHA-256 in Libsodium 1.0.20. Despite providing stronger security guarantees, \frameworkbaggy demonstrates consistently lower overhead across most benchmarks. The primary exception is Poly1305 in NaCl, where Baggy incurs lower overhead. Overall, Baggy introduces an average overhead of 146.6\%, whereas \frameworkbaggy incurs an average overhead of 71.9\%, representing a relative reduction of approximately 51\%. Delta Pointers observes a lower overhead than Baggy, ranging from 1.3\% to 56\%. \frameworkdp demonstrates comparable or slightly higher overhead due to the extra instructions for dual-safety. On average, Delta Pointer and \frameworkdp incur overhead of 20.4\% and 28.9\%, respectively, where \framework shows a slight overhead increase. 

\textbf{Speculative defense.} \changes{We were unable to evaluate LightSLH on the SPEC benchmarks due to its limited instruction support and scalability limitations. LLVM's SLH incurs overhead ranging from 1\% to 264\%, with an average of 116\%. Compared with LLVM's SLH, \frameworkbaggy achieves lower overhead on all benchmarks except lbm\_r and nab\_r, reducing overhead by 33\% on average. Similarly, \frameworkdp incurs slightly higher overhead only on namd\_r and lbm\_r, while achieving an average overhead reduction of 37.9\%.
}

For cryptographic libraries, \frameworkbaggy incur average overhead of 71.9\%, whereas LLVM’s SLH and LightSLH incur overheads of 45.6\% and 30.6\%, respectively. This higher overhead primarily stems from the fact that \frameworkbaggy instrumentations consist of memory loads for the bounds table. Notably, despite its stronger safety, \frameworkbaggy achieves lower overhead in 7 evaluated benchmarks compared to LLVM’s SLH, and in 4 benchmark compared to LightSLH. \frameworkdp, on the other hand, observes an overhead of only 28.9\%, which is lower than both SLH variants while providing dual-safety thanks to the in-pointer metadata storage. 

\begin{figure*}[t]
    \centering
    \includegraphics[width=0.9\linewidth]{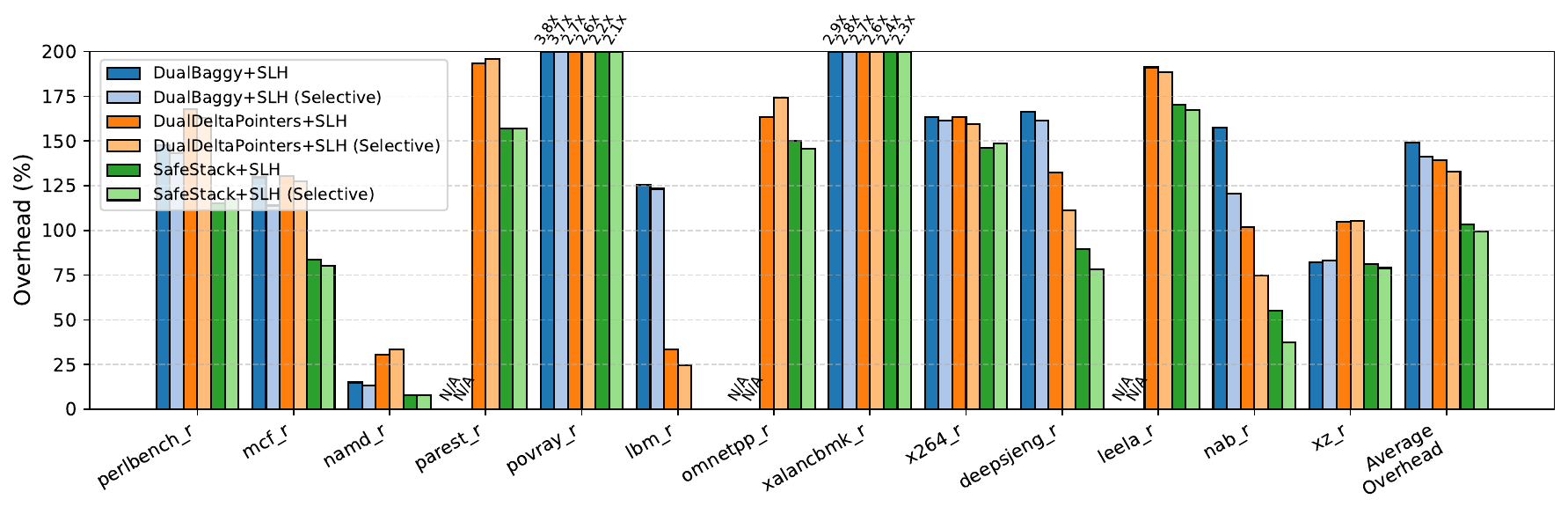}
    \caption{The effect of selective protection on SPEC benchmarks}
    \label{fig:selective-spec}
\end{figure*}

\begin{figure*}[t]
    \centering
    \includegraphics[width=0.95\linewidth]{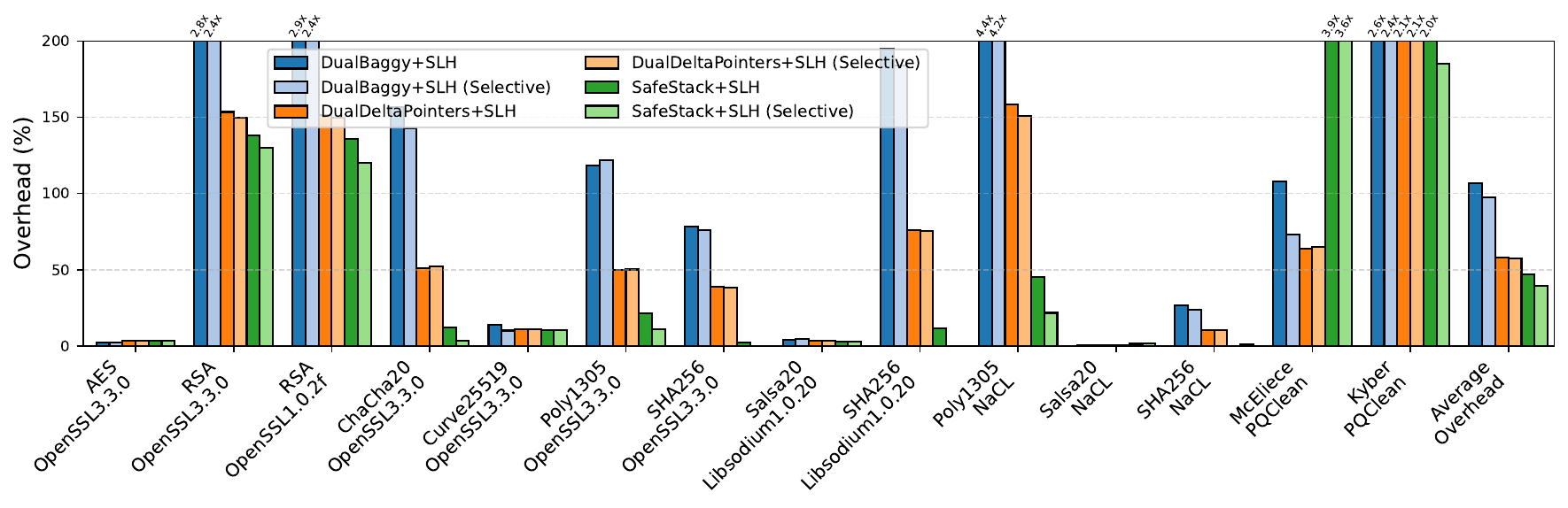}
    \caption{The effect of selective protection on cryptographic libraries}
    \label{fig:selective-crypto}
\end{figure*}

\subsection{Case Study}

To better understand how \framework's configurations achieve lower overheads than naive compositions, we provide a real-world example in the big number library (bn) of OpenSSL 3.3.0. The big numbers in OpenSSL are managed in an internal stack. \framework analysis reports a potential out-of-bounds store in the stack push function below.

\begin{lstlisting}[
    linebackgroundcolor={%
    \ifnum\value{lstnumber}=17
            \color{red!30}
    \fi
    }
]
static int BN_STACK_push(BN_STACK *st, unsigned int idx)
{
    if (st->depth == st->size) {
        /* Need to expand */
        unsigned int newsize =
            st->size ? (st->size * 3 / 2) : BN_CTX_START_FRAMES;
        unsigned int *newitems;

        if ((newitems = OPENSSL_malloc(sizeof(*newitems) * newsize)) == NULL)
            return 0;
        if (st->depth)
            memcpy(newitems, st->indexes, sizeof(*newitems) * st->depth);
        OPENSSL_free(st->indexes);
        st->indexes = newitems;
        st->size = newsize;
    }
    st->indexes[(st->depth)++] = idx;
    return 1;
}
\end{lstlisting}

In this function, the stack is resized and reallocated if the current stack cannot hold more entries. The new size computation at Line 6 can potentially overflow the integer size and become smaller (\codevar{newsize} $<$ \codevar{st->depth}), making the allocation size at Line 9 smaller than the actual required size. Finally, at Line 17, the access to the index \codevar{st->depth} could write into memory regions outside of the allocation bounds. This pattern of integer overflow followed by out-of-bounds write can be found in many high-risk vulnerabilities (e.g., CVE-2025-39967, CVE-2026-5121, and CVE-2026-5444).

In a naive composition of \baselinebaggy, the store operation at Line 17 will be transformed to the following code, with the Baggy instrumentation marked as blue and SLH instrumentation marked as yellow.

\begin{lstlisting}[
    linebackgroundcolor={%
    \ifnum\value{lstnumber}=4
            \color{blue!30}
    \fi
    \ifnum\value{lstnumber}=5
            \color{blue!30}
    \fi
    \ifnum\value{lstnumber}=6
            \color{blue!30}
    \fi
    \ifnum\value{lstnumber}=8
            \color{blue!30}
    \fi
    \ifnum\value{lstnumber}=7
            \color{yellow!30}
    \fi
    \ifnum\value{lstnumber}=9
            \color{yellow!30}
    \fi
    \ifnum\value{lstnumber}=10
            \color{yellow!30}
    \fi
    \ifnum\value{lstnumber}=11
            \color{yellow!30}
    \fi
    }
]
depth = st->depth;
base = st->indexes;
addr = &base[depth];
log_size = size_table[base >> log_slot_size];
is_oob = (addr XOR base) >> log_size != 0
if(is_oob) {
    mis = is_oob ? 0 : -1
    addr = slow_path(base, addr);
} else 
    mis = !is_oob ? 0 : -1
hardened_addr = addr | mis;
store(idx, hardened_addr);
st->depth++;
\end{lstlisting}

Baggy introduces a branch to check if the index is out of the allocation bounds (i.e., \codevar{newsize}), which ensures the store operation stays inside bounds in sequential execution. Then, to preserve the semantics of the program before hardening, SLH inserts an else branch to keep track of the mis-speculation flag \codevar{mis}. Upon storing, the address is hardened by \codevar{mis}. With the layered instrumentation, when reaching Line 12, the transformed code will correctly set the MSB if \codevar{depth} is greater than \codevar{newsize}. That said, the store operation still relies on the hardware check to reject an address where MSB is set, leaving the complex composition superfluous.

In contrast, \frameworkbaggy instrument Line 17 as follows. By checking the bounds in a branchless manner and removing the reliance on hardware check, \dualbaggy instrumentation is more efficient and robust against speculative attacks at the same time.

\begin{lstlisting}[
    linebackgroundcolor={%
    \ifnum\value{lstnumber}=4
            \color{blue!30}
    \fi
    \ifnum\value{lstnumber}=5
            \color{blue!30}
    \fi
    \ifnum\value{lstnumber}=6
            \color{blue!30}
    \fi
    }
]
depth = st->depth;
base = st->indexes;
addr = &base[depth];
log_size = size_table[base >> log_slot_size];
is_oob = (addr XOR base) >> log_size != 0
final_addr = select(is_oob, -1, addr);
store(idx, hardened_addr);
st->depth++;
\end{lstlisting}

\subsection{Selective Defense Overhead}

The detailed comparison of selective defenses for each benchmark is shown in~\Cref{fig:selective-spec} and~\Cref{fig:selective-crypto}.

\subsection{Analysis Running Time}

The running time for our analysis is shown in \Cref{tab:runnningtime}.

\begin{table}
\centering
\caption{Analysis running time}
\label{tab:runnningtime}
\begin{tabular}{ll}
\toprule
Benchmark               & Running Time (s)  \\
\midrule
perlbench\_r & 8254.01  \\
mcf\_r       & 6.77     \\
namd\_r      & 24.55    \\
parest\_r    & 604.2    \\
povray\_r    & 1309.28  \\
lbm\_r       & 1.42     \\
omnetpp\_r   & 180.49   \\
xalancbmk\_r & 51.23    \\
x264\_r      & 937.43   \\
deepsjeng\_r & 156.22   \\
leela\_r     & 7.9      \\
nab\_r       & 149.82   \\
xz\_r        & 2.98     \\
\midrule
\midrule
OpenSSL3.3.0/AES        & 0.04              \\
OpenSSL3.3.0/RSA        & 6.69              \\
OpenSSL1.0.2f/RSA       & 7.54              \\
OpenSSL3.3.0/ChaCha20   & 0.07              \\
OpenSSL3.3.0/Curve25519 & 0.95              \\
OpenSSL3.3.0/Poly1305   & 0.03              \\
OpenSSL3.3.0/SHA256     & 0.28              \\
Libsodium1.0.20/Salsa20 & 0.05              \\
Libsodium1.0.20/SHA256  & 0.17              \\
NaCL/Poly1305           & 0.2               \\
NaCL/Salsa20            & 0.03              \\
NaCL/SHA256             & 0.18              \\
PQClean/McEliece        & 0.16               \\
PQClean/Kyber           & 0.99              \\
\bottomrule
\end{tabular}
\end{table}



\subsection{Artifact}

This section details how to build and run \framework. The tool chain can be found at https://doi.org/10.5281/zenodo.22823571. The repositories include all the source code of \framework, and benchmarks except SPEC CPU 2017 benchmark suite that cannot be shared due to copyright issue. \framework can be either run by using the pre-built Docker image~\cite{dockerhub-dualguard} or by building from source~\cite{dualguard-anonymous-repo}. We recommend using the pre-built Docker image to ease the burden on setting up the tool chain.

\subsubsection{Using Pre-built \framework with Docker}
Make sure the test machine has Docker installed. We provide the Docker image on Dockerhub~\cite{dockerhub-dualguard}. Get the container running with the following commands.

\begin{lstlisting}[language=bash]
# Pull Docker image from DockerHub
docker pull dualguard/dualguard:latest

# Run a container with the image
docker run --name dualguard -dit dualguard/dualguard

# Get into the container's bash
docker exec -it dualguard bash
\end{lstlisting}

Inside the container, the artifact is located at \texttt{/root/dualguard}. \framework has been built in the image, but SPEC CPU 2017 benchmark suite has to be installed manually due to copyright reasons. Please install the benchmark suite according to its instructions.

After the benchmark suite is successfully installed, edit \texttt{SPEC\_PATH} to point to the installation directory, and run the setup script.

\begin{lstlisting}[language=bash]
# Set up benchmark harness for SPEC
bash ./setup.sh
\end{lstlisting}

\subsubsection{Building \framework from Source} We have tested building the system under Ubuntu 24.04, which provides the best compatibility of software used. Please have 30 GB of disk space available on your system. The instructions in this documentation assume the user is running this version of Ubuntu Linux.

First, install the dependencies.

\begin{lstlisting}[language=bash]
sudo apt update
sudo apt install git build-essential python3 ninja-build wget cmake libtinfo-dev libtinfo6
\end{lstlisting}

Then, download the source code from the repo~\cite{dualguard-anonymous-repo}, and decompress the source code. 

Before running the building script, please install SPEC CPU 2017 benchmark suite and edit \texttt{SPEC\_PATH} to point to the installation directory. Finally, navigate into the root directory of the source code and run the building script. 

\begin{lstlisting}[language=bash]
bash ./build.sh
\end{lstlisting}







Run the benchmarks with the following commands accordingly.

\begin{lstlisting}[language=bash]
# run SPEC benchmarks
./run.sh spec

# run LightSLH benchmarks
./run.sh lightslh

# run both
./run.sh
\end{lstlisting}

If SPEC CPU 2017 benchmark suite is not available, run \codevar{run.sh} lightslh to run the LightSLH benchmarks only.

Using the following commands to visualizing the results.

\begin{lstlisting}[language=bash]
# This will create Figure 5 and 10 in the paper
cd $SPEC_ROOT/scripts
python3 plots.py

# Produce Figure 6, 9 and 11 in the paper
cd LightSLH-benchmarks/bench/wrapper
python3 plots.py
\end{lstlisting}

\end{document}